\newtheorem{lemma}{Lemma}
\newtheorem{remark}{Remark}
\begin{document}
\title{Joint Computation and Communication Resource Optimization for Beyond Diagonal UAV-IRS Empowered MEC Networks}
\author{\IEEEauthorblockN{Asad Mahmood, Thang X. Vu, \IEEEmembership{Senior Member,~IEEE}, Wali Ullah Khan, \IEEEmembership{Member,~IEEE}, Symeon Chatzinotas, \IEEEmembership{Fellow,~IEEE}, and 
Bj\"orn Ottersten, \IEEEmembership{Fellow,~IEEE}, }
\thanks{{The part of this work has been presented at the  2022 IEEE Globecom Workshops (GC Wkshps) \cite{10008627}}. This work is supported by the Luxembourg National Research Fund via project 5G-Sky, ref. C19/IS/13713801/5G-Sky, and project RUTINE, ref. C22/IS/17220888/RUTINE. \newline   
\indent The authors are with the Interdisciplinary Centre for Security,
Reliability and Trust (SnT), University of Luxembourg, 4365 Luxembourg
City, Luxembourg. Email: \{asad.mahmood, thang.vu,waliullah.khan symeon.chatzinotas, bjorn.ottersten\}@uni.lu.}}
\markboth{Submitted to Journal}%
{Shell \MakeLowercase{\textit{et al.}}: Bare Demo of IEEEtran.cls for IEEE Journals} 
\maketitle
\begin{abstract}
Recent advances in sixth-generation (6G) communication systems mark a critical evolution towards widespread connectivity and ultra-reliable, low-latency communication for numerous devices engaged in real-time data collection. However, this progression faces significant challenges, primarily due to limited battery life and computational capacity of devices, and signal obstruction from urban infrastructure like high-rise buildings. Intelligent Reconfigurable Surfaces (IRS) integrated within Mobile Edge Cloud (MEC) infrastructures emerge as a strategic response, offering enhanced computing on demand to navigate device constraints and alternative communication routes to alleviate coverage issues. Nonetheless, connectivity challenges persist for users in remote or less populated areas, restricting their access to consistent and reliable communication networks.
This paper introduces Beyond Diagonal IRS (BD-IRS, or IRS 2.0), a new IRS family member deployed on unmanned aerial vehicles (UAVs) within Mobile Edge Computing (MEC) networks, termed BD-IRS-UAV. This system offers on-demand links for remote users to offload tasks to the MEC server, addressing resource and battery constraints. We develop a joint optimization strategy to minimize the system's worst-case latency and UAV hovering time by optimizing BD-IRS-UAV deployment and the allocation of computational and communication resources. The optimization challenge is addressed by decomposing it into two sub-problems: 1) BD-IRS-UAV Placement and Computational Resource Optimization, and 2) Communication Resource Optimization, solved through an iterative process. The proposed design markedly improves system performance, exhibiting a superior rate increase of $17.75\%$ over traditional single-connected diagonal IRS and a $25.43\%$ enhancement compared to  IRS placements on buildings. Additionally, by considering worst-case latency as a performance metric, our proposed scheme exhibits an enhancement of approximately $13.44\%$ relative to the binary offloading scheme.

\end{abstract}
\begin{IEEEkeywords}
Beyond diagonal intelligent reflective surfaces, unmanned aerial vehicles, mobile edge cloud, and resource optimization.
\end{IEEEkeywords}
\section{Introduction}
\IEEEPARstart{I}{n} the rapidly advancing landscape of wireless communications, the emergence of the next-generation wireless networks holds great promise and equally formidable challenges. As we stand on the edge of the 6G era, there is a collective expectation for wireless networks to deliver high-speed data rates, pervasive connectivity, robust reliability, and ultra-low latency \cite{santos2023automated,9861699}. These higher standards result from the digital age's more varied and data-hungry applications, ranging from the Internet of Things (IoT) ecosystem to augmented reality and virtual reality experiences \cite{mahmood2021optimal}. However, unlocking the full potential of these future networks is challenging. The anticipated proliferation of connected devices, data-hungry applications, and mission-critical use cases introduces various challenges, e.g., computation burden and stringent latency. Traditional wireless networks may need help to cope with the surge in demand for {\color{black}high data rates}, and the existing infrastructure may need to improve in delivering the ultra-low latency required for real-time applications \cite{wang2023online}. 
Mobile edge computing (MEC) has emerged as a promising technology with the potential to tackle these challenges \cite{santos2023automated}. By bringing computational resources closer to the users, MEC minimizes latency, optimizes traffic routing, and enhances computational energy efficiency, improving overall network efficiency \cite{li2021intelligent}. However, the deployment of MEC networks and their effectiveness can be further amplified through innovative approaches, and herein lies the synergy with unmanned aerial vehicles equipped with intelligent reflecting surfaces (IRS-UAV) \cite{zhai2022energy,9768113}. The integration of IRS-UAV into the MEC framework introduces unprecedented flexibility and adaptability. IRS-UAV can dynamically adjust their positions and manipulate signals, optimizing MEC services coverage and capacity. This combination of IRS-UAV and MEC networks {\color{black}presents a promising architecture for future} 6G wireless communications \cite{he2022joint}.
\par
IRS, a key element in 6G networks, can be reconfigured dynamically to improve signal strength, energy efficiency, and quality in wireless networks \cite{chen2022multi,8811733}. IRS can generally be categorized based on operating modes, which include reflective, transmissive, and hybrid modes, and architectural designs, which comprise single-connected, fully-connected, and group-connected \cite{li2023beyond,10159457}. The following sections provide an in-depth exploration of the various types and categories of IRS, offering more comprehensive insights. As such, recognizing the significance of joint UAV-IRS deployment and resource allocation becomes paramount, particularly in scenarios where low latency is imperative.
\subsection{Related Works} 
Researchers have extensively studied IRS-enabled MEC networks. The research works in \cite{xu2022energy,el2021latency,li2021energy,mao2022reconfigurable,li2021intelligent} investigated the energy efficiency of the system. The problems in those studies are non-convex in nature due to interference terms and coupling decision variables. Such problems are very challenging to obtain their optimal solutions. Thus, they used alternating optimization method to achieve suboptimal solutions. Moreover, these works have optimized various variables to improve the system performance. In particular, they have optimized transmit power, time allocation, offloading partitions, decoding order, phase shift design, transmission rate, and size of transmission data. 
Moreover, Chen {\em et al.} \cite{chen2022multi} have considered downlink and uplink transmissions in IRS enabled wireless power transfer MEC networks. The authors optimized energy beamforming, user detection, transmit power, and phase shift design to maximize the sum computational rate of the system. Furthermore, the authors in \cite{liu2023star} have also considered simultaneous transmitting and reflecting IRS in MEC networks. They optimized the phase shift design, energy partition strategies for local computing and offloading, and the received beamforming. In addition, Zhu {\em et al.} \cite{zhu2022dynamic} have considered an IRS enabled MEC network in the vehicular scenario and optimized the computational resources and IRS phase shift to improve the system's task offloading, computing, and finish rates. 
\par
In addition, some research studies have employed artificial intelligence-based algorithms in IRS enabled MEC networks \cite{wang2023online,hu2021reconfigurable,huang2021reconfigurable,wang2022resource}. In particular, the work in \cite{wang2023online} has adopted an online algorithm named Lyapunov optimization in combination with semi-definite relaxation for computational energy efficiency. In \cite{hu2021reconfigurable}, the authors have maximized the total completed tasks using deep learning algorithm. Accordingly, \cite{huang2021reconfigurable} has exploited machine learning techniques to minimize the maximize learning error in MEC networks while \cite{wang2022resource} has used deep reinforcement learning for optimizing energy consumption. Of late, Huang {\em et al.} \cite{huang2022integrated} have proposed integrated sensing and communication in IRS enabled heterogeneous MEC networks. Following that, the energy consumption minimization problem of the UAV integrated with the IRS MEC network is discussed in \cite{9651523}
\par   
Besides the above literature, few studies have considered UAV communication in IRS enabled MEC networks. For instance, Zhai {\em et al.} \cite{zhai2022energy} have considered IRS-UAV in MEC networks. The authors maximized the achievable energy efficiency of the system by jointly optimizing MEC resources, passive IRS beamforming and UAV trajectory. In another work \cite{savkin2022joint}, the authors have considered multiple UAVs in IRS enabled MEC networks. They enhanced the computational energy efficiency of the system by jointly optimizing the UAV's path planning and transmission. The research work in \cite{shnaiwer2022minimizing} has jointly optimized the IRS phase shift and UAV path selection to maximize the computation energy efficiency of IRS-UAV MEC networks. Moreover, the authors in \cite{shnaiwer2022multihop} have provided a graph-based heuristic algorithm for multi-hop task routing in IRS enabled MEC networks involving UAV communications. Their objective was to improve the average energy efficiency of the system. The authors of \cite{he2022joint} have investigated a latency problem in the IRS-UAV MEC network by optimizing the time allocation, IRS passive and user active beamforming, UAV hovering, and computing task scheduling. Furthermore, Asim {\em et al.} \cite{asim2022energy} have jointly optimized the UAV trajectory and IRS phase shift to maximize the energy efficiency in MEC networks. 
\par    
Recently, researchers have proposed beyond diagonal IRS (BD-IRS, also known IRS 2.0) various wireless networks. For instance, in \cite{li2022reconfigurable}, the authors have maximized the achievable data rate of a multi-user systems by jointly optimizing the transmit beamforming and beyond diagonal phase shift design in {\color{black}BD-IRS} networks. Another work in \cite{nerini2023closed} has derived a closed-form solution to optimize the received signal power of {\color{black}BD-IRS} enabled single user single-input single-output, single-user multiple-input multiple-output, and multi-user multiple-input single-output systems. Moreover, Li {\em et al.} \cite{li2022beyond} have investigated a sum rate maximization problem in {\color{black}BD-IRS} enabled multi-user wireless systems by jointly optimizing the transmit precoding and beyond diagonal phase shift design. In \cite{nerini2023pareto}, the researchers have derived the expression of the Pareto frontier to investigate the complexity trade-off in {\color{black}BD-IRS} systems. Besides that, the authors of \cite{li2023beyond} have proposed a multi-sector {\color{black}BD-IRS} communication model and designed an efficient beamforming algorithm to maximize the sum rate of the system. Further, the same authors in \cite{10159457} have provided an efficient algorithm for dynamic grouping to maximize the {\color{black}BD-IRS} enabled multi-user systems.

\subsection{Motivation and Contributions}
As highlighted in the preceding discussion, MEC, assisted by the UAV-enabled IRS, emerges as an effective solution to provide effective communication through an assisted communication link to users located in remote areas in dense urban environments. For this, extensive research has been done on diagonal IRS in MEC systems, including work by \cite{zhu2022dynamic,xu2022energy,li2021intelligent,el2021latency,chen2022multi,li2021energy,mao2022reconfigurable,liu2023star,wang2023online,hu2021reconfigurable,huang2021reconfigurable,wang2022resource,huang2022integrated,9651523}. Accordingly, the authors in \cite{zhai2022energy,savkin2022joint,shnaiwer2022minimizing,shnaiwer2022multihop,he2022joint,asim2022energy} have proposed UAV communication in MEC networks. More closely related work to ours was study in \cite{9651523} addresses network energy optimization through task segmentation, transmission power, trajectory, and IRS phase shift adjustments with the assumption of constant device energy use with fixed local computational capabilities—an approach divergent from real-world scenarios where device energy consumption closely aligns with available computational resources. On the other hand, the research described in \cite{li2022reconfigurable,nerini2023closed,li2022beyond,nerini2023pareto,li2023beyond,10159457} delves into integrating BD-IRS into conventional network configurations, overlooking its compatibility with MEC and UAV communications, as well as the essential consideration of optimal IRS placement. To the best of our knowledge, this is the first work that exploits the BD-IRS in combination with UAVs in MEC systems. Our goal is to minimize the worst-case latency of the system via a joint design of the optimal location of BD-IRS-UAV, local and edge computation resources, task segmentation, allocation of user transmit power, received beamforming vector, and phase shift design of the BD-IRS. Due to multiple optimization variables, the joint optimization problem is non-convex, non-linear, and NP-hard. Therefore, we first divide the original problem into two subproblems, and then each subproblem is transformed into a tractable representation, which can be efficiently solved using standard convex optimization methods.

\par Our main contributions in this paper can also be summarized as follows.
\begin{enumerate}
\item We consider a BD-IRS-UAV enabled MEC network where multiple users are intended to offload their tasks to the MEC server due to limited resources and finite battery life. The MEC receives signals from direct and BD-IRS-UAV links. Our aim is to minimize the worst-case computation latency and UAV hovering time via a joint design of the system's parameters.  
\item To tackle the prohibitive computational complexity of the original joint optimization problem, we first decouple it into two subproblems: the first subproblem designs the BD-IRS-UAV location, computation resource allocation, and task segmentation, while the second problem optimizes the transmit power, beamforming vectors, and phase shift.
\item In particular, to solve the first sub-problem, we propose a heuristic K-Mean clustering algorithm to determine the location of the BD-IRS-UAV based on its placement constraints. A closed-form equation for the task segmentation variable is calculated, which is then solved iteratively with a computational resource allocation problem to minimize the maximum latency. To tackle the non-convexity of the rate function in the second sub-problem, a Lagrange dual transformation-based approach is applied, for which an iterative algorithm is proposed. We show that the convergence to at least a local optimum of the proposed iterative algorithm is theoretically guaranteed.
\item Comprehensive Monte Carlo simulation results are provided to validate the efficacy of the proposed framework. These results reveal that both fully-connected and group-connected architectures surpass the traditional single-connected architecture by $23.45\%$ and $17.75\%$, respectively, when evaluating performance based on rate. Furthermore, the optimal deployment of IRS achieves an approximate improvement of $25.76\%$ compared to conventional IRS placement on buildings. Additionally, by considering worst-case latency as a performance metric, our proposed scheme exhibits an enhancement of approximately $13.44\%$ relative to the binary offloading scheme.
\end{enumerate}
\subsection{Organization}
The following sections summarize the rest of the paper:
Section \ref{System Model} summarizes the proposed system model and problem statement. Section \ref{Proposed Solution} presented an algorithm for efficient BD-IRS-UAV deployment and resource allocation. Likewise, Section \ref{RD} contains the simulation results. The paper is finally concluded in Section \ref{CC}
\section{System Model}
\label{System Model}
We consider an up-link BD-IRS-UAV-enabled MEC network in a smart city, where multiple low-powered user devices (UEs) are randomly distributed throughout the area to sense and collect real-time application-generated data. Let us denote the number of UEs as $\mathcal{N}=\{n|1,2,3,\cdots N\}$, where $n$ indexes $n$-th UE. The access point (AP) in the MEC system is equipped with $M$ antennas. Considering the obstacles on the surface (i.e., buildings, trees, vehicles, etc.), the received signal may be affected, resulting in overall system performance degradation. To overcome this issue and further enhance the performance, a BD-IRS-UAV consisting of $K$ cells, denoted as $\mathcal{K}=\{k|1,2,3,\cdots K\}$, is positioned with a height of $H$ to reflect the signal toward MEC. Thus, communications between UEs and the MEC occur through direct and BD-IRS-UAV-assisted links.
\begin{figure}
\centering
\includegraphics[width=0.46\textwidth]{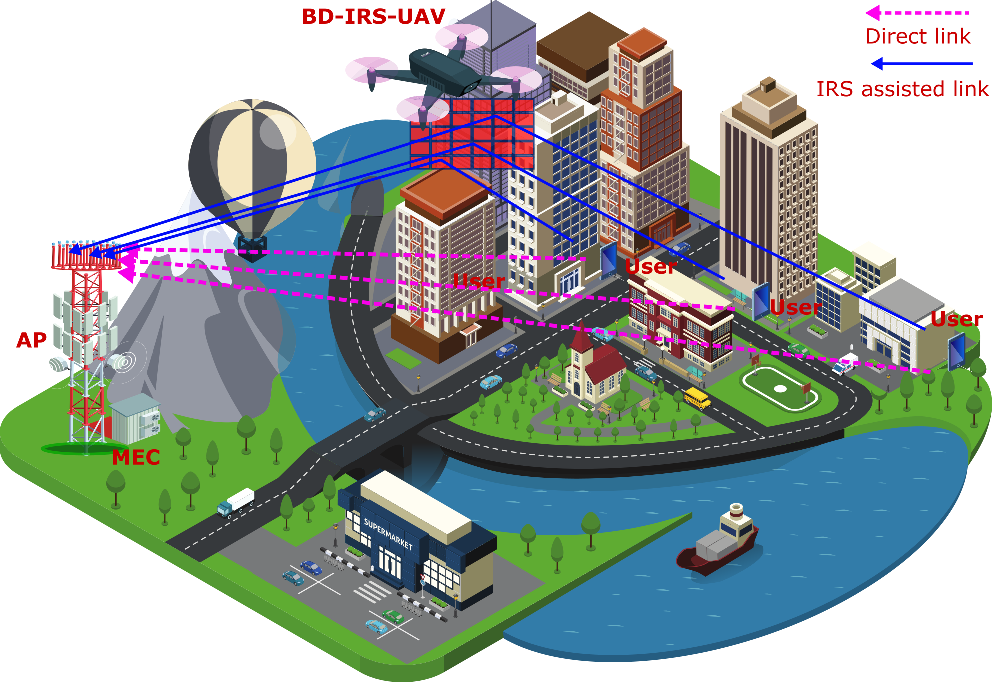}
\caption{System model}
\vspace{-5mm}
\label{SM}
\end{figure}
\par
To ensure clarity, we represent the positions of the BD-IRS-UAV, UEs, and AP within the 3D Cartesian coordinate system: $\textbf{z}=[x_u, y_u, H]^T$, $\textbf{q}=[x_n, y_n, 0]^T$, and $\textbf{v}=[x_b, y_b, 0]^T$, respectively. Moreover, we assume the availability of precise channel state information at the AP through pilot signals, establishing a maximum achievable performance level for this practical system \cite{8811733}. To maximize the effectiveness of the MEC server, UEs adopt a partial offloading scheme for task placement. Under this scheme, tasks are partitioned into two components: one portion is processed locally, while the other is offloaded to the MEC server for more intensive computation. In the following, we first discuss the communications model of the considered system. Then, we provide the task computational and energy consumption models, followed by the problem formulation.
\subsection{Communication Model}
\label{Communication Model}
Let $U_n=\{S_n, C^{'}_n\}$ represent the $n$-th UE task, where $S_n$ and $C^{'}_n$ denote the size (in bits) and computational cycle (cycles/bit) requirements, respectively. Moreover, the number of cycles requested to compute $S_n$ of bits is expressed as $C_n=C^{'}_nS_n$. To handle extensive computation, $n$-th UE offloads a portion of their tasks to the MEC system through the communication link $\textbf{h}_n$, which can be expressed as:
   ${\textbf{h}}_n= \textbf{h}_{n,b}+\textbf{G}\Phi \textbf{h}_{n,u}.$
Where, $\textbf{h}_{n,b}=\sqrt{PL_{n,b}}\Hat{\mathbf{h}_{t,n,b}}\in \mathbb{C}^{M \times 1}$
represents the communication link between the UE $n$ and the MEC server, $\textbf{G}=\sqrt{PL_{b,u}}\Hat{\mathbf{h}}_{r,n,b}\Hat{\mathbf{h}}^H_{t,n,b} \in \mathbb{C}^{M \times K}$ represents the communication link between the UEs and the BD-IRS-UAV, and $\textbf{h}_{n,u}=\sqrt{PL_{n,u}}\Hat{\mathbf{h}_{t,n,u}} \in \mathbb{C}^{K \times 1}$ represents the communication link between the UAV-IRS and the MEC server. Where $PL_{(.)}$, $\Hat{\mathbf{h}}_{r,(.),(.)}$ and $\Hat{\mathbf{h}}_{t,(.),(.)}$ represent path loss and the antenna response vectors that can be expressed mathematically as\cite{9651523}:
\begin{equation}
\small
\label{UE-UAV}
\Hat{\mathbf{h}}_{(.),(.),(.)}=\left[1,e^{\frac{-j2\pi d_{o}}{\lambda}\vartheta_{(.),(.)}},\dots,e^{\frac{-j(I-1)2\pi d_{o} }{\lambda}\vartheta_{(.),(.)}}\right] ^T.
\end{equation}
In \eqref{UE-UAV} $\vartheta_{(.),(.)}$ represents the AOA/AOD of the signal, $I\in \{K,M\}$,  $d_o$ represents the separation distance, and $\lambda$ denotes the wavelength of the carrier. Moreover, the reflective phase shift coefficient matrix is denoted as $\boldsymbol{\Phi} \in \mathbb{C}^{K \times K}$. The number of non-zero elements in $\boldsymbol{\Phi}$ depends on the architecture of the IRS, which follows different IRS element connection topologies as discussed in the following subsections. The IRS phase shift is configured by the AP and transmitted to the IRS controller through a dedicated control link\cite{9651523}.
\subsubsection{Single-Connected IRS} In the single-connected IRS, elements are not connected to each other, resulting in the diagonal the phase shift coefficient matrix:
\begin{equation}
\boldsymbol{\Phi}=\text{diag}\{\phi_1,\phi_2,\cdots, \phi_K\},\; \text{s.t.}\; |\phi_k|^2=1, \forall k.
\end{equation}
\subsubsection{Fully-Connected IRS} In this architecture, IRS elements are fully connected to each other through re-configurable impedance components, resulting in a full matrix satisfying the following constraint:
$\boldsymbol{\Phi}^H\boldsymbol{\Phi}=\textbf{I}_K.$
\subsubsection{Group-Connected IRS} In the group connected IRS architecture, a total of $K$ elements are divided into $L$ groups denoted by $\mathcal{L}=\{l|1,2,3,\cdots L\}$, where each group follows the fully connected topology. Each group has the same number of elements for convenience, denoted by $\Bar{K}=K/L$. Based on this, $\boldsymbol{\Phi}$ is a block diagonal matrix given by:
\begin{equation}
\boldsymbol{\Phi}= \text{blkdiag}\{\boldsymbol{\Phi}_1,\boldsymbol{\Phi}_2,\cdots, \boldsymbol{\Phi}_L\},
\end{equation}
where each entity $\boldsymbol{\Phi}\in \mathbb{C}^{\Bar{K} \times \Bar{K}}$  satisfy:
$\boldsymbol{\Phi}_l^H\boldsymbol{\Phi}_l=\textbf{I}_{\Bar{K}},\forall l \in \mathcal{L}.$
\begin{table}
\centering
\caption{RIS architectures \& non-zero element count}
\label{tab:1}
\resizebox{\columnwidth}{!}{%
\begin{tabular}{cccccl|}
\cline{2-6}
\multicolumn{1}{c|}{} &
  \multicolumn{1}{c|}{No. G} &
  \multicolumn{1}{c|}{GD} &
  \multicolumn{1}{c|}{E/G} &
  \multicolumn{1}{c|}{No. NZE} &
  Constraints \\ \hline
\multicolumn{1}{|c|}{SC} &
  \multicolumn{1}{c|}{$K$} &
  \multicolumn{1}{c|}{$1$} &
  \multicolumn{1}{c|}{$1$} &
  \multicolumn{1}{c|}{$K$} &
  $|\phi_k|^2=1, \forall k\in \mathcal{K}$ \\ \hline
\multicolumn{1}{|c|}{FC} &
  \multicolumn{1}{c|}{1} &
  \multicolumn{1}{c|}{$K$} &
  \multicolumn{1}{c|}{$K^2$} &
  \multicolumn{1}{c|}{$K^2$} &
  $\boldsymbol{\Phi}^H\boldsymbol{\Phi}=\textbf{I}_K.$ \\ \hline
\multicolumn{1}{|c|}{GC} &
  \multicolumn{1}{c|}{$L$} &
  \multicolumn{1}{c|}{$\Bar{K}$} &
  \multicolumn{1}{c|}{$\Bar{K}^2$} &
  \multicolumn{1}{c|}{$L\Bar{K}^2$} &
  $\boldsymbol{\Phi}_l^H\boldsymbol{\Phi}_l=\textbf{I}_{\Bar{K}}, \quad \forall l \in \mathcal{L}.$ \\ \hline
\multicolumn{6}{|l|}{\begin{tabular}[c]{@{}l@{}}SC: Single Connected, FC, Fully Connected, GC, Group Connected, \\ GD: Group Dimension, E/G: Element per Group, NZE; Non-Zero Elements\end{tabular}} \\ \hline
\vspace{-5mm}
\end{tabular}%
}
\end{table}
\par
Table \ref{tab:1} provides an overview of IRS architectures and element topologies with associated constraints. Each entity of the $\boldsymbol{\Phi}$ is represented as $\phi_k = e^{j\theta_k}$, where $\theta_k$ is a phase shift in the range of $[0, 2\pi]$, and $\xi \in [0, 1]$ represents the amplitude of the reflection coefficient  $k$-th element. In this work, we only consider IRS with full reflection, and as such, we set $\xi = 1$. 
\par 
Thus, the achievable data rate for offloading of the $n$-th UE task is given by $R_n = B\log_2\left(1+\gamma_n\right)$, where $\gamma_n$ represents the signal-to-interference and noise (SINR) ratio and can be mathematically expressed as follows:
\begin{equation}
    \label{SINR}
    \gamma_n\!=\!\frac{p_n|\boldsymbol{w}_n^H\left(\textbf{h}_{n,b}+\textbf{G}\boldsymbol{\Phi} \textbf{h}_{n,u}\right)|^2}{\sum_{n^{'}\ne n}^Np_{n^{'}}|\boldsymbol{w}_{n}^H\left(\textbf{h}_{{n^{'}},b}+\textbf{G}\boldsymbol{\Phi} \textbf{h}_{{n'},u}\right)|^2\!\!+\!{\sigma}^2|\boldsymbol{w}_n^H|^2}.
\end{equation}
Here, $\boldsymbol{w}_n \in \mathbb{C}^{M \times 1}$ represents the receive beamforming vector at the AP, and $p_n$ denotes the transmission power of the $n$-th UE. Additionally, ${\sigma^2} $ represents the additive white Gaussian noise.
\subsection{Task Computational and Energy Consumption Model}
We consider a partial offloading scheme,  which facilitates parallel computation of resource-intensive tasks at both the MEC and the devices. The extent to which bits are computed locally is denoted by the variable $\beta_n \in [0,1]$, in which $\beta_n$ parts of the task are computed locally, and $1-\beta_n$ parts of the task will be off-loaded to the MEC server.

\subsubsection{Local Computation}
In the local computational scheme, UEs utilize their local computational resources $f_n^l \in [0,1]$ to compute the task. As such, the time required to compute the task locally can be expressed as follows:
\begin{equation}
    \label{local time}
    t_n^l=\frac{\beta_nC_n}{f_n^l f_n^{max}}.
\end{equation}
In equation \eqref{local time}, $f_n^{max}$ represents the maximum computation resources of the $n$-th UE. Similarly, the energy consumed during the local computation of the task is given by:
\begin{equation}
    \label{local Energy}
E_n^l=\epsilon\left(f_n^lf_n^{max}\right)^2\beta_nC_n.
\end{equation}
Here, $\epsilon$ represents the computational energy efficiency, which is assumed to be the same for all devices.
\subsubsection{Edge Computation}
UEs offload the portions of the task to the MEC server for extensive and parallel computation using a partial offloading scheme. The total computational time of the task at the MEC server is a combination of both the offloading time and computational time. As per the communication model discussed in Section \ref{Communication Model}, the offloading time is expressed as:
\begin{equation}
\label{offloading time}
t_n^o=\frac{(1-\beta_n)S_n}{R_n} .
\end{equation}
Similarly, the energy required for task offloading can be calculated as $E_n^o = p_n t_n^o$. Upon receiving the computation task at the AP, the MEC allocates a portion of its shared computational resources, denoted by $f_n^s \in [0,1]$, to perform the intensive computation. Similarly, the time consumed while computing the task at the MEC can be expressed as: 
\begin{equation}
    \label{Edge time}
    t_n^s=\frac{(1-\beta_n)C_n}{f_n^sf_s^{max}}.
\end{equation}
In addition to the maximum computational resources available at the MEC represented by $f_s^{max}$, the time it takes to send the task back from the MEC server to the UE is not considered in this work due to its negligible size. The total computational time for edge computing is expressed as $t_n^E=t_n^o+t_n^s$. We also assume that the AP has an infinite and continuous power supply, and this work does not account for the energy consumption required to perform the MEC's extensive computation task. Therefore, the analysis considers only the energy consumed for offloading the task portion to the MEC and the energy consumed for local computation \cite{mahmood2021optimal}.
\subsubsection{UAV Energy Consumption Model}
Since the UAV passively reflects the signal through {\color{black}BD-IRS}, there is no energy consumed for communications. The UAV's energy consumption is for hovering to relay the signal, mainly determined by the off-loading time and the computation time at the edge, as the time for returning the result is negligible.  Consequently, the overall UAV hover time can be expressed as $t^h = \max\{t_1^E, t_2^E, \cdots, t_N^E\}$, and the energy consumption of the UAV is denoted as $E_u^H = P^h t^h$, where $P^h$ signifies the constant hovering power, contingent on the UAV's hardware configuration.
\subsection{Problem Formulation}
In this study, task $U_n$ is computed in parallel at the MEC server and locally using a partial offloading scheme. As a result, the computational delay for the task and UAV hovering time is represented as $T_n=\max(t_n^l,t_n^E)$ and $t^h = \max\{t_1^E, t_2^E, \cdots, t_N^E\}$ respectively, where $t_n^E=t_n^o+t_n^s$ denotes the computing delay experienced at the MEC server. Our objective is to minimize the utility function, taking into account the UAV's hovering energy consumption and task computational delay,  by jointly optimizing the computational resources of MEC $\boldsymbol{f}^s=[f_1^s,f_2^s\cdots,f_N^s]$ and UEs $\boldsymbol{f}^l=[f_1^l,f_2^l,\cdots,f_N^l]$, transmission power $\boldsymbol{p}=[p_1,p_2,\cdots,p_N]$, task segmentation variable $\boldsymbol{\beta}=[\beta_1,\beta_2,\cdots,\beta_N]$, active beam-forming vector at AP $\boldsymbol{W}=[\boldsymbol{w}_1,\boldsymbol{w}_2,\cdots,\boldsymbol{w}_N]$, the phase shift $\boldsymbol{\Phi}$, and location of UAV $\textbf{z}$. We can mathematically formulate the joint optimization problem as follows:
\allowdisplaybreaks
\begin{subequations}
\label{P0}
\begin{align}
\label{OF0}
\!\!\mathcal{P}_0: \!\!\!\!\min_{{\substack{\boldsymbol{f}^s,\boldsymbol{f}^l,\boldsymbol{p}\\\boldsymbol{\beta},\boldsymbol{W},\boldsymbol{\Phi},\textbf{z}}}}\!\! & w_1\!P^h\max_{n}\left(t_n^E\right)+w_2\!\max_{n}\left(T_n\right)\\
\label{P0C01}
\textrm{s.t.}\quad &{\sum}_{n=1}^{N}f_n^s\le 1,\\
\label{P0C02}
\quad &\gamma_n\ge\gamma_n^{min},\forall n\in \mathcal{N},\\
\label{P0C03}
\quad &E_n^l+E_n^o+P_rt_n^o\le E^{max}, \forall n\in \mathcal{N},\\
\label{P0C04}
\quad &d_u^b=\norm{\textbf{z}-\textbf{v}}^2\le d_{u,b}^{max},\\
\label{P0C05}
\quad &d_n^u=\norm{\textbf{q}-\textbf{z}}^2\le d_{n,u}^{max},\forall n\in \mathcal{N},\\
\label{P0C06}
\quad &\norm{\boldsymbol{w}_n}^2=1,\forall n\in \mathcal{N},\\
\label{P0C07}
\quad& 0 \le \beta_n\le 1, \forall n\in \mathcal{N},\\
\label{P0C08}
\quad & \boldsymbol{\Phi}\quad\text{as per Table \ref{tab:1}} \\
\label{P0C09}
\quad & f_n^l\ge0, f_n^E\ge0, p_n\ge0, \forall n\in \mathcal{N}.
\end{align}
\end{subequations}
Several constraints were imposed to achieve the objective in \eqref{P0}. Constraint \eqref{P0C01} ensures that the MEC's shared computation resources do not exceed the maximum available resources. Imposing SINR requirements for reliable communication, constraint \eqref{P0C02} is crucial. Furthermore, constraint \eqref{P0C03} limits UEs' energy consumption, restricting it at a maximum threshold of $E^{max}$, where $P_r$ represents each UE's constant energy consumption. Constraints \eqref{P0C04} and \eqref{P0C05} play a vital role in guaranteeing communication link quality and imposing maximum distance limits between the UAV-AP and UE-UAV, set as $d_{u,b}^{max}$ and $d_{n,u}^{max}$, respectively.
Constraint \eqref{P0C06} enforces the beamforming constraint at the AP, while constraint \ref{P0C07} establishes bounds for the task segmentation variable. Ensuring the unit modulus constraint of the IRS, constraint \ref{P0C08} is essential, and finally, constraint \eqref{P0C09} specifies the bounds for the decision variables.
\begin{remark}
The optimization problem $\mathcal{P}_0$ presents a non-linear and non-convex nature, primarily due to $\log$ function in $t_n^o$ within equation \eqref{OF0}. Additionally, the coupling between beamforming vector $\boldsymbol{w}$, phase shift $\boldsymbol{\Phi}$, and transmission power $\boldsymbol{p}$ renders $\mathcal{P}_0$ an NP-hard problem. As a result, finding the global optimal solution is a challenging task. To address this issue, we propose a solution based on block coordinate descent (BCD) methods that decouple the original optimization problem into a series of subproblems that can be solved iteratively to obtain a locally optimal solution for $\mathcal{P}_0$ at low complexity.
\end{remark}
\section{A Framework for Joint Optimization of Computing and Communication Resources}
\label{Proposed Solution}
This section introduces an efficient framework for optimizing BD-IRS-UAV placement, communication, and computational resources using the BCD technique, which iteratively optimizes one variable while keeping the others fixed until convergence. In the subsequent section, we optimize computational resources and BD-IRS-UAV location while keeping communication resources constant. We then proceed to optimize communication resources while holding computational resources fixed. This approach effectively manages the joint optimization problem, leading to convergence toward a locally optimal solution. Although it does not guarantee the global optimal solution, the performance gap to achieve the global optimal solution is left for future work. In the following, we will detail the sub-problems, whose overview diagram of the problem decomposition is shown in Fig.~\ref{fig:RoadMap}.

\begin{figure}[ht]
	\centering
	\includegraphics[width=0.90\linewidth]{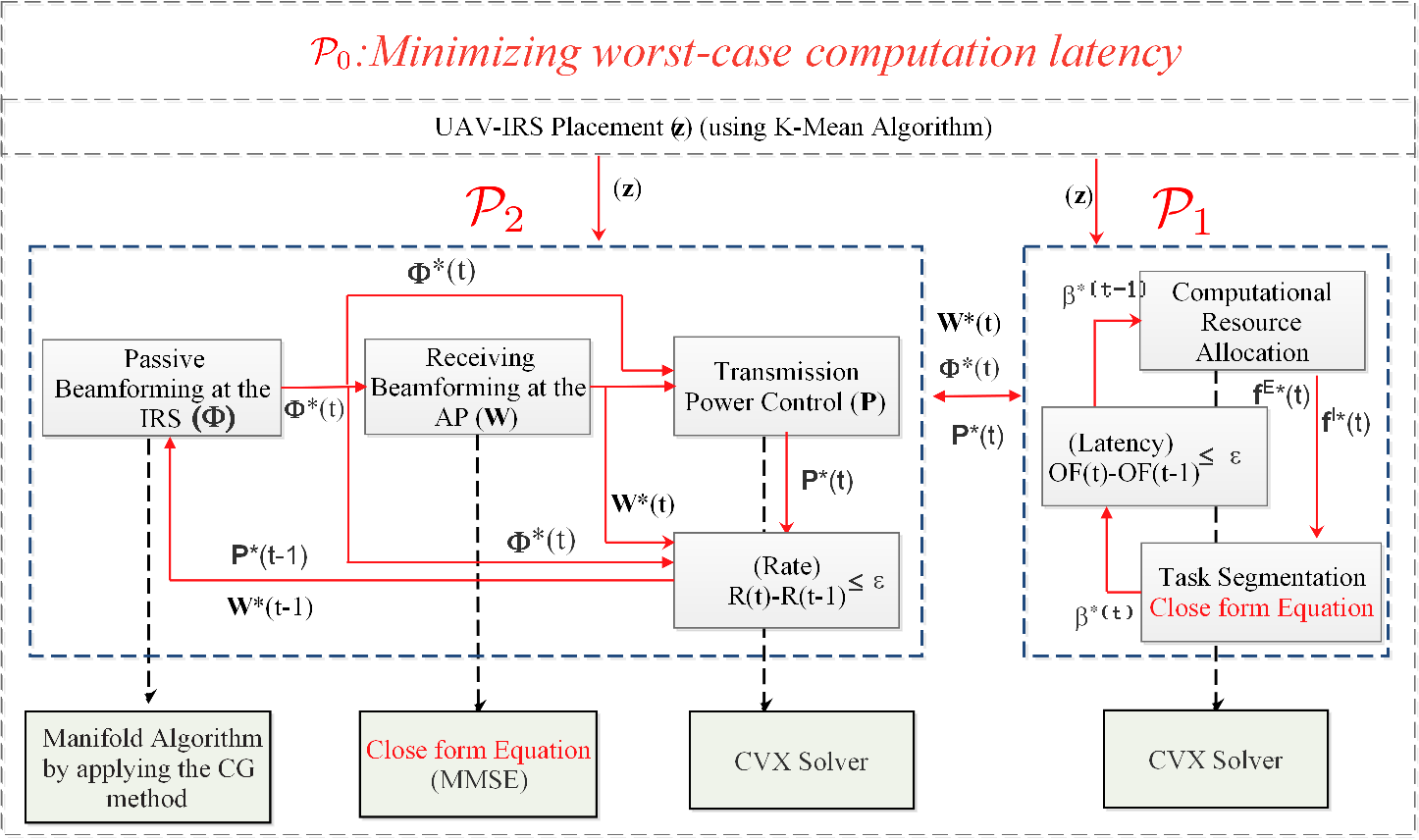}
	\caption{Road-map of iterative algorithms.}
	\label{fig:RoadMap}
	\vspace{-0.3cm}
\end{figure}

\subsection{UAV Placement and Computational Resource Optimization
}
\label{OCS}
This section aims to determine the optimal values of computational resources ($\boldsymbol{f^s, f^l}$), task segmentation variable ($\boldsymbol{\beta}$) and {\color{black}BD-IRS-UAV}  placement ($\textbf{z}$) that minimize the \eqref{OF0}, under given the communication resources ($\boldsymbol{w, \Phi, p}$). Notably, equation \eqref{OF0} indicates the task's computational time influenced by the offloading and edge computing times, directly impacting the UAV's hovering time. Thus, our goal of minimizing the UAV's hovering time is equivalent to minimizing the task's computational time. Building upon this insight, we reformulate the objective function as follows:
\allowdisplaybreaks
\begin{subequations}
\label{P1}
\begin{align}
\label{OF1}
\mathcal{P}_1: &\min_{{\substack{\boldsymbol{f}^s,\boldsymbol{f}^l,\boldsymbol{\beta},\textbf{z}}}}  \max_{n}\left(T_n\right)\\
\!\!\!\!\!\!\!\!\!\!\!\!&\text{\cref{P0C01,P0C03,P0C04,P0C05,P0C07,P0C09}}.\notag
\end{align}
\end{subequations}
Jointly optimizing ($\boldsymbol{f^s, f^l}$) and $\boldsymbol{\beta}$ is challenging. To address this, we initiated the placement of BD-IRS-UAV, intending to strategically position the UAV in a 3D plane, thereby enhancing the link quality for each user by minimizing the distance, resulting in improved objective function values. To determine the UAV location, a practical and effective heuristic approach based on a modified K-means algorithm is designed to work sequentially and be distributed as given in \cite{el2019learn}. Unlike traditional K-Means algorithms, this modified version situates the UAV at the barycenter of all users, not just the closed users. In other words, it computes the UAV's position as $x_u\leftarrow\frac{1}{|N|}{\sum}_{n\in \mathcal{N}}x_n$ and $y_u\leftarrow\frac{1}{|N|}{\sum}_{n\in \mathcal{N}}y_n$, where the height of the UAV is optimized in the range $H \in [H_{\min}, H_{\max}] $ that minimizes the path loss that can be expressed mathematically as 
\begin{equation}
H^* = \underset{H \in [H_{\min}, H_{\max}]}{\arg\min} \, PL(H),
\end{equation}
while ensuring \eqref{P0C04} and \eqref{P0C05} and resulting improved link quality. Moreover, as per \eqref{OF1}, ${\beta}_n$ primarily depends on computational resources. Moreover, under the given values of $\boldsymbol{f^s}$ and $\boldsymbol{f^l}$, the ${\beta}_n$ directly affects the local and edge computing. So, to find the closed-form equation for ${\beta}_n$, we can look at the monotonic relationship between offloading and local computational time, which is shown in Lemma \ref{lemma-beta}.
\begin{lemma}
\label{lemma-beta}
Under the feasible value of $\boldsymbol{f^s\text{~and~}f^l}$ the value of task segmentation value is given by:
\begin{equation}
\small
    \label{tasksegmentation} 
    \beta_n=\frac{f_n^lf_n^{max}\left(R_nC_n+f_n^sf_s^{max}S_n\right)}{R_nC_n\left(f_n^lf_n^{max}+f_n^sf_s^{max}\right)+S_nf_n^lf_n^{max}f_n^sf_s^{max}},
\end{equation}
whereas the optimal values $\boldsymbol{f^{s*}\text{~and~}f^{s*}}$ results in the optimal value of  $\boldsymbol{\beta}^*$.
\end{lemma}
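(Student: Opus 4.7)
The plan is to exploit the structure of $T_n = \max(t_n^l, t_n^E)$ as a scalar function of $\beta_n$ with $(f_n^s, f_n^l)$ held fixed. First I would observe from \eqref{local time}, \eqref{offloading time}, and \eqref{Edge time} that $t_n^l$ is a strictly increasing affine function of $\beta_n$ vanishing at $\beta_n = 0$, while $t_n^E = t_n^o + t_n^s$ is a strictly decreasing affine function of $\beta_n$ vanishing at $\beta_n = 1$. Because the pointwise maximum of a strictly increasing and a strictly decreasing affine function attains its minimum precisely where the two curves intersect, the scalar problem $\min_{\beta_n \in [0,1]} T_n$ reduces to solving the balance equation $t_n^l = t_n^E$. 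The boundary values $t_n^l(0) < t_n^E(0)$ and $t_n^l(1) > t_n^E(1)$, together with continuity, guarantee that the crossing point lies in $(0,1)$, so constraint \eqref{P0C07} is automatically satisfied without it being enforced explicitly.

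The second step is purely algebraic: I would substitute the explicit forms from \eqref{local time}, \eqref{offloading time}, and \eqref{Edge time} into $t_n^l = t_n^E$, collect $\beta_n$ on the left and $(1-\beta_n)$ terms on the right, and clear the common denominator $R_n f_n^s f_s^{max} f_n^l f_n^{max}$. Writing $A = C_n/(f_n^l f_n^{max})$ and $B = S_n/R_n + C_n/(f_n^s f_s^{max})$, the balance equation collapses to $\beta_n(A+B) = B$, so $\beta_n = B/(A+B)$, which after multiplying numerator and denominator by $R_n f_n^s f_s^{max} f_n^l f_n^{max}$ yields precisely \eqref{tasksegmentation}. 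This step is routine and I do not expect any difficulty.

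For the second assertion, I would appeal to the separability of the inner minimization: since the closed form $\beta_n^{\star}$ depends on the remaining decision variables only through the pair $(f_n^s, f_n^l)$, the joint program \eqref{P1} reduces to an outer minimization over $(\boldsymbol{f}^s, \boldsymbol{f}^l, \textbf{z})$ of the reduced objective $\max_n T_n\bigl(\beta_n^{\star}(f_n^s, f_n^l), f_n^s, f_n^l\bigr)$. Consequently, evaluating \eqref{tasksegmentation} at the outer optimizers $(\boldsymbol{f}^{s*}, \boldsymbol{f}^{l*})$ returns the jointly optimal $\boldsymbol{\beta}^{*}$. The main subtlety I anticipate lies in justifying that the optimal $\beta_n$ is the interior crossing point rather than a boundary of $[0,1]$: since $T_n$ is non-smooth at $t_n^l = t_n^E$, a first-order condition is unavailable, and I would instead rely on the monotonicity-plus-boundary argument sketched in the first paragraph to rule out boundary optima.
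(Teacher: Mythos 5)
Your proposal is correct and follows essentially the same route as the paper's Appendix A: exploit the opposite monotonicity of $t_n^l$ (increasing in $\beta_n$) and $t_n^E$ (decreasing in $\beta_n$) to conclude that $\min_{\beta_n}\max\{t_n^l,t_n^E\}$ is attained at the crossing point $t_n^l=t_n^E$, then solve the balance equation to obtain \eqref{tasksegmentation}. Your additional boundary argument showing the crossing lies in $(0,1)$ is a small but welcome tightening of the paper's (terser) monotonicity claim, and your algebra checks out.
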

\begin{proof}
See Appendix \ref{APA}
\end{proof}
\subsubsection{Local and Edge Computational Resources}
Given the value of task segmentation available, the optimization problem for optimal allocation of local and edge computational resources can be expressed as follows:
\allowdisplaybreaks
\begin{subequations}
\label{P1A}
\begin{align}
\label{OF1A}
\mathcal{P}_{1A}: &\min_{{\substack{\boldsymbol{f}^s,\boldsymbol{f}^l}}}  \max_{n}\left(T_n\right)\\
&\text{\cref{P0C01,P0C03,P0C09}}.
\end{align}
\end{subequations}
Since  $\mathcal{P}_{1A}$ is convex, it can be solved efficiently using the convex optimization toolbox, e.g., CVX. Whereas the procedure for solving the $\mathcal{P}_{1A}$ is expressed in Algorithm \ref{algo:Algorithem 1}.

\begin{algorithm2e}[t]
\small
\SetAlgoLined
\textbf{Inputs:  $\textbf{h}_n^b$, $\textbf{h}_n^u$, $\textbf{G}$,$f_n^{max}$, $f_s^{max}$, $C_n$, $S_n$, $\boldsymbol{p}$ 
 $\boldsymbol{w}$, $\boldsymbol{\Phi}$, $\textbf{z}$, $\epsilon$}\;
\textbf{Output: $\boldsymbol{\beta}^*$, $\boldsymbol{{f^s}^*,{f^l}^*}$}\;
\textbf{Initialization:} ${\boldsymbol{f}_{t}^s},{\boldsymbol{f}_{t}^l}$ \text{satisfying} \eqref{P0C01} and \eqref{P0C03}\;
\While{error $>$ $\epsilon$ \&\& $i< i^{max}$}{
        Calculate $\boldsymbol{\beta}_{i+1}$ using  \eqref{tasksegmentation}\\
        Calculate ${\boldsymbol{f_{i+1}^s},\boldsymbol{f_{i+1}^l}}$ using \eqref{P1A}\\
       //Calculate Objective function value by solving \eqref{OF1} and compute error//.\\
       error=$\frac{|\text{obj}\left({\boldsymbol{f_{i+1}^s},\boldsymbol{f_{i+1}^l}},\boldsymbol{\beta}_{i+1}\right)-\text{obj}\left({\boldsymbol{f_{i}^s},\boldsymbol{f_{i}^l}},\boldsymbol{\beta}_{i}\right)|}{\text{obj}\left({\boldsymbol{f_{i+1}^s},\boldsymbol{f_{i+1}^l}},\boldsymbol{\beta}_{i+1}\right)}$\\
       $i\leftarrow i+1 $ // Update value of i\;
        
}
\caption{\textbf{{Iterative \!Algorithm to\! solve\!} \eqref{P1}}}\label{algo:Algorithem 1}
\end{algorithm2e}
\subsubsection{Convergence and  Complexity Analysis of Algorithm \ref{algo:Algorithem 1}}
\paragraph{Complexity}
This section provides a worst-case per-iteration complexity analysis for Algorithm \ref{algo:Algorithem 1}, designed to iteratively solve \eqref{P1A} and \eqref{tasksegmentation}.  Specifically, problem \eqref{P1A} encompasses $2N$ decision variables and $3N+1$ constraints resulting per iteration computational complexity as  $\mathcal{O}\left((2N)^3(3N\!+\!1)\right)$. Furthermore, since \eqref{tasksegmentation} solely relies on parameter $N$, the per-iteration complexity to solve \eqref{tasksegmentation} characterized as $\mathcal{O}(N)$. In addition, for $i^{max}$ iterations, the overall complexity of Algorithm \ref{algo:Algorithem 1} is $\small{\mathcal{O}\left(i^{max}\left((2N)^3(3N+1)+N\right)\right)}$.
\paragraph{Convergence}  Let $\mathcal{Q}=[\boldsymbol{f}_i^{s*},\boldsymbol{f}_i^{l*}]$ denote the optimal solution of \eqref{P1A}, resulting in $\boldsymbol{\beta}i^*$ at current iteration. The objective function value of \eqref{P1A} at this iteration is represented as $\mathbb{F}(\mathcal{Q}_i)$ corresponds to a subset of the original optimization problem. Furthermore, the objective function value iteratively improves, i.e., $\mathbb{F}(\mathcal{Q})_{i+1})\le\mathbb{F}(\mathcal{Q}_{i})$. According to Beck et al. \cite{beck2010sequential}, a stable point is reached when the difference between current and previous iterations is less than $\epsilon$, i.e., $\mathbb{F}{i+1}-\mathbb{F}{i}\le \epsilon$.

\subsection{Optimizing Communication Resources}
Given the {\color{black}BD-IRS-UAV} location, task segmentation vector, and computational resources vector, the optimization problem for optimizing  communication resources can be expressed as: 
\allowdisplaybreaks
\allowdisplaybreaks
\begin{subequations}
\label{P2}
\begin{align}
\label{OF2}
\mathcal{P}_2: &\min_{{\substack{\boldsymbol{p},\boldsymbol{w},\boldsymbol{\Phi}}}}\max_{n}\left(T_n\right)\\
&\text{\cref{P0C02,P0C03,P0C06,P0C08,P0C09}}
\end{align}
\end{subequations}
\begin{remark}
The optimization problem \eqref{P2} remains challenging to solve due to its non-convex and non-linear nature, resulting from the fractional objective function and coupling of decision variables $\boldsymbol{p,w, \Phi}$. Moreover, the interdependent nature of these variables further complicates the problem, classifying it as NP-hard. To address these challenges, our approach, detailed in the subsequent section, entails a two-fold transformation strategy. Initially, we reconfigure the objective function into a more tractable form, as described in Section \ref{T1}. Subsequently, we employ a re-transformation process, detailed in Section \ref{T2}, to overcome its non-convex and non-linear nature and find an efficient solution.
\end{remark}
\paragraph{Problem Transformation}
\label{T1}
Lemma \ref{lemma-beta} shows that the optimal solution of problem $\mathcal{P}_0$ is such that $t_n^l=t_n^E$. By replacing $t_n^l$ with $t_n^E$ and removing the constant term, we can reformulate $\mathcal{P}_2$ as follows:
\allowdisplaybreaks
\begin{subequations}
\label{P2A}
\begin{align}
\label{OF2A}
\mathcal{P}_{2A}: &\min_{{\substack{\boldsymbol{p},\boldsymbol{w},\boldsymbol{\Phi}}}}\max_{n}\left(\frac{\beta_nS_n}{R_n}\right)\\
&\text{\cref{P0C02,P0C03,P0C06,P0C08,P0C09}}.\notag
\end{align}
\end{subequations}
Following that, to overcome the fractional nature of the objective function, we introduce an auxiliary variable $\Gamma_n$ and reformulate them as follows:
\allowdisplaybreaks
\begin{subequations}
\label{P2A1}
\begin{align}
\label{OF2A1}
\mathcal{P}_{2A1}: &\min_{{\substack{\boldsymbol{p},\boldsymbol{w},\boldsymbol{\Phi},\boldsymbol{\Gamma}}}}\max_{n}\left(\Gamma_n\right),\;\text{s.t.}\;\frac{\beta_nS_n}{R_n}\le \Gamma_n, \forall n \in \mathcal{N}.\\
&\text{\cref{P0C02,P0C03,P0C06,P0C08,P0C09}}.\notag
\end{align}
\end{subequations}
The following Lemma aids in optimizing \ref{OF2A1}:
\begin{lemma}
\label{lemma-3}
If $\left(\{\boldsymbol{p}^*\},\{\boldsymbol{w}^*\},\{\boldsymbol{\Phi}^*\},\{\boldsymbol{\Gamma}^*\}\right)$ is the optimal solution for  \ref{P2A1}, then there must exist a $\{\boldsymbol{\chi}^*\}$ such that $\left(\{\boldsymbol{p}^*\},\{\boldsymbol{w}^*\},\{\boldsymbol{\Phi}^*\}\right)$ satisfies the Karush-Kuhn-Tucker condition of the following problem for $\boldsymbol{\chi}=\boldsymbol{\chi}^*$ and $\boldsymbol{\Gamma}=\boldsymbol{\Gamma}^*$
\begin{subequations}
\label{P2A2}
\begin{align}
\label{OF2A2}
\mathcal{P}_{2A2}: &\min_{{\substack{\boldsymbol{p},\boldsymbol{w},\boldsymbol{\Phi},\textbf{z}}}}\max_{n}\left(\chi_n(\beta_nS_n-\Gamma_nR_n)\right),\\
&\text{\cref{P0C02,P0C03,P0C06,P0C08,P0C09}}.\notag
\end{align}
\end{subequations}
Furthermore, it also satisfies the following constraints
\begin{equation}
\label{KKTConstraint}
\begin{aligned}
\chi_n=\frac{1}{R_n^*}    , \;
\Gamma_n=\frac{\beta_nS_n}{R_n^*}    , \forall n \in \mathcal{N}.
\end{aligned}
\end{equation}
If $\left({\boldsymbol{p}^*},{\boldsymbol{w}^*},{\boldsymbol{\Phi}^*}\right)$ is a solution to problem $\mathcal{P}_{2A2}$ that satisfies the constraint \eqref{KKTConstraint} by setting $\boldsymbol{\chi}=\boldsymbol{\chi}^*$ and $\boldsymbol{\Gamma}=\boldsymbol{\Gamma}^*$, then it follows that $\left({\boldsymbol{p}^*},{\boldsymbol{w}^*},{\boldsymbol{\Phi}^*},{\boldsymbol{\Gamma}^*}\right)$ is a solution to problem $\mathcal{P}_{2A1}$ for the Lagrange variable $\boldsymbol{\chi}=\boldsymbol{\chi}^*$.
\end{lemma}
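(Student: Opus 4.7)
My plan is to prove Lemma~\ref{lemma-3} by a Lagrangian dual transformation argument in the spirit of Shen and Yu's fractional-programming framework, adapted here to the min-max structure of the latency problem. The underlying idea is to form the KKT system of $\mathcal{P}_{2A1}$, use complementary slackness to eliminate $\boldsymbol{\Gamma}$ in terms of $(\boldsymbol{p},\boldsymbol{w},\boldsymbol{\Phi})$, and show that the remaining stationarity conditions are identical to those produced by $\mathcal{P}_{2A2}$ once $(\boldsymbol{\chi},\boldsymbol{\Gamma})$ is frozen by \eqref{KKTConstraint}. The lemma's two implications then become the forward and reverse readings of this identification.

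For the forward direction, I would rewrite $\max_n\Gamma_n$ in epigraph form by introducing a scalar $t$ with auxiliary constraints $\Gamma_n\le t$, so the problem becomes smooth in the newly introduced variables, and attach multipliers $\mu_n\ge 0$ to $\Gamma_n\le t$ and $\chi_n\ge 0$ to the fractional inequality $\beta_nS_n-\Gamma_nR_n\le 0$. Stationarity in $t$ gives $\sum_n\mu_n=1$ and stationarity in $\Gamma_n$ gives $\mu_n=\chi_n R_n^*$, while complementary slackness on the fractional constraint forces $\Gamma_n^*=\beta_nS_n/R_n^*$ whenever $\chi_n^*>0$. The tightness relation is the second half of \eqref{KKTConstraint}, and the first half $\chi_n^*=1/R_n^*$ is then recovered by the normalization of the dual variables that matches the sensitivity of $\Gamma_n$ to $R_n$ at the current iterate, namely $\chi_n^*\Gamma_n^*=\beta_nS_n/(R_n^*)^2$.

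With $(\boldsymbol{\chi},\boldsymbol{\Gamma})$ frozen at these values, the stationarity conditions of $\mathcal{P}_{2A1}$ in $(\boldsymbol{p},\boldsymbol{w},\boldsymbol{\Phi})$ contain only the gradient $-\chi_n\Gamma_n\nabla R_n$ coming from the fractional constraint, plus the multiplier contributions of the unchanged constraints \eqref{P0C02}--\eqref{P0C09}. Passing $\mathcal{P}_{2A2}$ through its own epigraph form yields exactly the same system, with the epigraph multipliers identified as $\mu_n=\chi_n R_n^*$; all indices are active at the optimum since each summand $\chi_n(\beta_nS_n-\Gamma_n^*R_n^*)$ vanishes, so every such convex combination of subgradients is admissible. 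This proves the first implication of the lemma. The converse is then immediate: any triple $(\boldsymbol{p}^*,\boldsymbol{w}^*,\boldsymbol{\Phi}^*)$ satisfying the KKT of $\mathcal{P}_{2A2}$ together with \eqref{KKTConstraint} can be augmented with $\boldsymbol{\Gamma}^*=(\beta_nS_n/R_n^*)_n$ to close the full KKT system of $\mathcal{P}_{2A1}$ with Lagrange multiplier $\boldsymbol{\chi}^*$.

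The main obstacle I anticipate is the careful matching of subdifferentials across the two problems: since both objectives involve pointwise maxima, their stationarity must be expressed through convex combinations of the $N$ per-user gradients, and one has to verify that the weights arising from the fractional multipliers of $\mathcal{P}_{2A1}$ can always be reproduced by a feasible combination in $\mathcal{P}_{2A2}$. Establishing that every index is active at the optimum (so that the subdifferential of the max is the full simplex of single-user gradients) is what makes this identification legal; once it is in place, the unit-modulus constraint \eqref{P0C08}, the non-convexity of $R_n$ in $\boldsymbol{\Phi}$, and the coupling of $\boldsymbol{w}$ with $\boldsymbol{p}$ all carry over verbatim, because the Lagrangian dual transformation does not touch those constraints.
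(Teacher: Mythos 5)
The paper offers no proof of this lemma at all: it simply points to Section~2 of the cited fractional-programming reference, whose argument is written for a \emph{sum}-of-ratios objective. Your plan follows the same Lagrangian-dual-transform family of ideas, and most of your bookkeeping is sound (the epigraph reformulation, $\mu_n=\chi_nR_n^*$ from stationarity in $\Gamma_n$, complementary slackness giving $\Gamma_n^*=\beta_nS_n/R_n^*$, and the observation that at the candidate point every term $\chi_n(\beta_nS_n-\Gamma_n^*R_n^*)$ vanishes so the whole simplex of per-user gradients is available in $\mathcal{P}_{2A2}$). But there is a genuine gap at the one step that actually matters: the identification $\chi_n^*=1/R_n^*$ for \emph{every} $n$. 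Your own KKT system contradicts it. Stationarity in $t$ gives $\sum_n\mu_n=1$, and combined with $\mu_n=\chi_nR_n^*$ this forces $\sum_n\chi_nR_n^*=1$; setting $\chi_n=1/R_n^*$ for all $n$ would instead give $\sum_n\chi_nR_n^*=N$. Worse, the multipliers $\mu_n$ of the epigraph constraints $\Gamma_n\le t$ are supported only on the indices attaining the worst-case latency, so $\chi_n=\mu_n/R_n^*$ vanishes for every non-worst-case user, whereas \eqref{KKTConstraint} demands $\chi_n>0$ for all $n$. The sentence claiming that $\chi_n^*=1/R_n^*$ is ``recovered by the normalization of the dual variables'' and the relation $\chi_n^*\Gamma_n^*=\beta_nS_n/(R_n^*)^2$ is circular: that relation is a consequence of the values you are trying to derive, not a derivation of them.

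The root cause is structural, not cosmetic. In the sum-of-ratios setting of the cited reference the objective after the epigraph step is $\sum_n\Gamma_n$, so stationarity in each $\Gamma_n$ reads $1-\chi_nR_n=0$ and pins down $\chi_n=1/R_n^*$ term by term. Replacing the sum by a max destroys exactly that decoupling, and your proof inherits the breakage. To close the gap you would either have to (i) restrict attention to the active set and accept $\chi_n=0$ off it (which changes the statement of \eqref{KKTConstraint}), (ii) argue that at the optimum all $\Gamma_n$ are equal so that any convex combination of active gradients is admissible and then choose the particular combination proportional to $1/R_n^*$ (this needs an additional argument that such a choice is consistent with stationarity in $(\boldsymbol{p},\boldsymbol{w},\boldsymbol{\Phi})$, which is not automatic), or (iii) prove the lemma for the per-user decoupled problems and treat \eqref{KKTConstraint} as the update rule it is actually used as in \eqref{AUXX1}, rather than as a KKT condition of the min-max problem. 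As written, the forward implication is not established, and the converse direction inherits the same unproved identification.
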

\begin{proof}
    Please refer to Section 2 of \cite{jong2012efficient}
\end{proof}
To this end, the fractional optimization problem $\mathcal{P}_2$ transformed into a more trackable form $\mathcal{P}_{2A2}$, which is going to be solved in two steps. In the first step, we will find the $\left({\boldsymbol{p}^*},{\boldsymbol{w}^*},{\boldsymbol{\Phi}^*}\right)$ by solving   $\mathcal{P}_{2A2}$, whereas in the second step, we will update the value of ${\boldsymbol{\Gamma}} ~\text{and}~{\boldsymbol{\chi}}$.
\subsubsection{Step 1}
\label{T2}
Under the given value of  ${\boldsymbol{\Gamma}} ~\text{and}~{\boldsymbol{\chi}}$, the communication resources optimization problem can be simplified to:
\begin{subequations}
\small
\label{P2A3}
\begin{align}
\label{OF2A3}
\mathcal{P}_{2A3}: &\max_{{\substack{\boldsymbol{p},\boldsymbol{w},\boldsymbol{\Phi}}}}\min_{n}\left(\chi_n\Gamma_nR_n\right),\\
&\text{\cref{P0C02,P0C03,P0C06,P0C08,P0C09}}
\end{align}
\end{subequations}
The optimization problem \eqref{P2A3} presents a level of tractability; however, it retains its non-convex and non-linear nature, mainly due to the logarithmic function in  \eqref{OF2A3}. To overcome this challenge, we adopt techniques based on Lagrange dual transformation. The rate function in \eqref{P2A3} define as $R_n\left({p}_n,\boldsymbol{w}_n,\boldsymbol{\Phi}\right)=\log_2\left(1+\gamma_n\right)$, with $\gamma_n$ denoting the fractional SINR term for the $n$-th user. In order to address the non-convex and non-linear nature of the objective function, we introduce auxiliary variables $\psi_n \in \mathbb{R}^{N}$ corresponding to each $\gamma_n$ term and express the new objective function as follows: 
\begin{equation}
\small
\label{O1}
\begin{aligned}
R_n\left({p}_n,\boldsymbol{w}_n,\boldsymbol{\Phi},{\psi}_n\right)=\Bigl(\log_2\left(1+\psi_n\right)-\psi_n \\ 
+\frac{\left(1+\psi_n\right)p_n|\boldsymbol{w}_n^H {\textbf{h}}_n|^2}{\sum_{n^{'}\ne n}^Np_{n^{'}}|\boldsymbol{w}_{n}^H {\textbf{h}}_{n'}|^2\!\!+\!\boldsymbol{\sigma}^2|\boldsymbol{w}_n^H|^2}\biggr).
\end{aligned}
\end{equation}
In \eqref{O1}, the fractional term of SINR is independent of the logarithmic function, making it more straightforward to analyze. To facilitate this analysis, we employ a quadratic transformation technique to convert the fractional component of equation \eqref{O1} into an integer expression. Consequently, the modified objective function can be expressed as follows:
\begin{equation}
  \begin{aligned}
&R_n\left({p}_n,\boldsymbol{w}_n,\boldsymbol{\Phi},{\psi}_n,{\zeta}_n\right)=\Bigl(\log_2\left(1+\psi_n\right)-\psi_n\\
&+2\sqrt{(1+\psi_n)p_n}\mathcal{R}\{\zeta_n^{\dagger}\boldsymbol{w}_{n}^H {\textbf{h}}_{n}\}\\
&-|\zeta_n|^2\left({\sum}_{{n'}\ne n}^Np_{n'}|\boldsymbol{w}_{n}^H {\textbf{h}}_{n'}|^2+\boldsymbol{\sigma}^2|\boldsymbol{w}_n^H|^2\right)\biggr).
  \end{aligned}
\end{equation}
The variable $\zeta_n\in \mathbb{C}^{N}$ represents an additional auxiliary variable. With the introduction of these two auxiliary variables, we can reformulate the optimization problem \eqref{P2A3} as follows:
\begin{subequations}
\label{p3b}
\begin{align}
\mathcal{P}_{2A4}:  &\!\!\!\!\max_{{\substack{\boldsymbol{p},\boldsymbol{w},\boldsymbol{\Phi},\boldsymbol{\psi},\boldsymbol{\zeta}}}}\quad \!\!\!\!\!\!\min_n\left(\chi_n\Gamma_nR_n\left({p}_n,\boldsymbol{w}_n,\boldsymbol{\Phi},{\psi}_n,{\zeta}_n\right)\!\right)\\
&\text{\cref{P0C02,P0C03,P0C06,P0C08,P0C09}}.\notag
\end{align}
\end{subequations}
The problem presented in equation \eqref{p3b} is an optimization problem with multiple variables. One well-established method for solving such problems is the BCD iterative algorithm. In the following sections, we will decompose the problem into sub-problems and solve them iteratively. The subsequent sections will provide a detailed procedure for finding the solution to these sub-problems.
\paragraph{Update Auxiliary Variables}
Under the given values $\boldsymbol{p},\boldsymbol{w},\boldsymbol{\Phi}$ {and} $\boldsymbol{\psi} (\text{or} \boldsymbol{\zeta})$,  \eqref{p3b} is unconstrained optimization problem and thus solved by  
$
\frac{\partial R\left(\boldsymbol{\Phi},\boldsymbol{w},\boldsymbol{p},\boldsymbol{\psi},\boldsymbol{\zeta}\right)}{\partial \boldsymbol{\psi}}\!\!=\!\!0
$
or
$\!
\frac{\partial R\left(\boldsymbol{\Phi},\boldsymbol{w},\boldsymbol{p},\boldsymbol{\psi},\boldsymbol{\zeta}\right)}{\partial \boldsymbol{\zeta}}=0
$
and obtain the solution of each auxiliary variable as follows;
\begin{equation}
\label{AUX1}
    \psi_n=\frac{\left(1+\psi_n\right)p_n|\boldsymbol{w}_n^H {\textbf{h}}_n|^2}{\sum_{n^{'}\ne n}^Np_{n^{'}}|\boldsymbol{w}_{n}^H {\textbf{h}}_{n'}|^2\!\!+\!\boldsymbol{\sigma}^2|\boldsymbol{w}_n^H|^2}
\end{equation}

\begin{equation}
\small
\label{AUX2}
    \zeta_n=\frac{\sqrt{(1+\psi_n)p_n}\boldsymbol{w}_n^H {\textbf{h}}_n}{\sum_{n^{'}\ne n}^Np_{n^{'}}|\boldsymbol{w}_{n}^H {\textbf{h}}_{n'}|^2\!\!+\!\boldsymbol{\sigma}^2|\boldsymbol{w}_n^H|^2}
\end{equation}
From equations \eqref{AUX1} and \eqref{AUX2}, it can be observed that the value of the auxiliary variables is dependent on the transmission power $p_n$, phase shift $\boldsymbol{\Phi}$,  beamforming vector $\textbf{w}_n$ at the AP. Therefore, the optimal values of decision variables will lead to the optimal value of the auxiliary variables. The following sections present an efficient solution to finding these variables.

\paragraph{Beam-forming at AP}
Under the given values of ${\boldsymbol{p},\boldsymbol{\Phi},\boldsymbol{\psi},\boldsymbol{\zeta}}$, the optimization problem for the beamforming vector $\boldsymbol{w}$ at the AP can be expressed as follows:
\begin{subequations}
\label{p2A42}
\begin{align}
\mathcal{P}_{2A4-1}:  \!\!\!\!\max_{{\substack{\boldsymbol{w}}}}\quad &\!\!\!\!\!\!\min_n\left(\chi_n\Gamma_nR_n\left(\boldsymbol{w}_n\right)\!\right), \quad\norm{\boldsymbol{w}_n}^2=1,\forall n.
\end{align}
\end{subequations}
It is evident that both the objective function and the constraints of \eqref{p2A42} exhibit a convex nature. Consequently, it can be inferred that the receiver capable of achieving the maximum rate corresponds to the minimum mean square error-SIC receiver, as described in \cite{Windpassinger2004}
\begin{equation}   
\label{APB}
\!\!\!\!\boldsymbol{w}_n^{'}=\zeta_n\left({\sum}_{n\ne n'}^Np_{n'}\textbf{h}_{n'}^H\textbf{h}_{n'}+\boldsymbol{\sigma}^2\right)^{-1}\!\!\!\!\times\!\! \sqrt{(1+\psi_n)p_n}\textbf{h}_n
\end{equation}
After normalization,  we obtain  $\boldsymbol{w}_n=\frac{ \boldsymbol{w}_n^{'}}{| \boldsymbol{w}_n^{'}|}$.

\paragraph{Power Control at the User}
Given the values of $\boldsymbol{w}_n$, $\boldsymbol{\Phi}$, $\boldsymbol{\psi}$, and $\boldsymbol{\zeta}$, the optimization problem for the transmission power $\boldsymbol{p}$ at the user end can be expressed as follows:
\begin{subequations}
\label{p2A43}
\begin{align}
\!\!\!\!\mathcal{P}_{2A4-2}\!: \!& \max_{{\substack{\boldsymbol{p}}}} \min_n\left(\!\chi_n\Gamma_nR_n\left(\boldsymbol{p}\right)\!\right),\gamma_n\!\ge\!\gamma_n^{\text{min}},  p_n\ge 0,\forall n.
\end{align}
\end{subequations}

It's evident that \eqref{p2A43} represents a linear function with respect to $\boldsymbol{p}$. However, the QoS constraints are nonlinear and nonconvex in nature. Accordingly, the optimization problem can be reformulated as follows:

\begin{subequations}
\label{p2A43A}
\begin{align}
&\mathcal{P}_{2A4-2A}:  \max_{{\substack{\boldsymbol{p}}}}\quad \min_n\left(\chi_n\Gamma_nR_n\left(\boldsymbol{p}\right)\right)\\
&p_n\bar{h}_{n}\ge\gamma_n^{\text{min}}\left({\sum}_{n'\ne n}^Np_{n'}\bar{h}_{n'}+1\right) , \quad p_n\ge 0, \quad \forall n.
\end{align}
\end{subequations}
Here, $\bar{h}_{n}=\frac{\|\boldsymbol{w}_{n}^H\textbf{h}_{n}\|^2}{\boldsymbol{\sigma}^2\|\boldsymbol{w}_n^H\|^2}$.
Therefore, this problem can be efficiently solved by utilizing the convex optimization toolbox.
\paragraph{Phase Shift}
The mathematical representation of the sub-optimization problem concerning the phase shift control of IRS is as follows:
\begin{subequations}
\label{P2A44}
\begin{align}
&\mathcal{P}_{2A4-3}:  \max_{\boldsymbol{\Phi}}\quad \min_n\Bigl(\chi_n\Gamma_nR_n\left(\boldsymbol{\Phi}\right)\Bigr)\\
& \boldsymbol{\Phi}\quad\text{as per Table \ref{tab:1}} .\notag
\end{align}
\end{subequations}
Where $R_n\left(\boldsymbol{\Phi}\right)=
2\sqrt{(1+\psi_n)p_n}\mathcal{R}\{\zeta_n^{\dagger}\boldsymbol{w}_{n}^H {\textbf{h}}_{n}\}-|\zeta_n|^2\left({\sum}_{{n'}\ne n}^Np_{n'}|\boldsymbol{w}_{n}^H {\textbf{h}}_{n'}|^2\right).$ A convex objective function characterizes the optimization problem described earlier, but its efficient solution is challenging due to the unitary constraints listed in Table \ref{tab:1}. One approach to this challenge is to relax the unitary constraint and solve the relaxed optimization problem. However, this results in a performance loss as the resulting solution is not based on the original problem. Therefore, careful consideration is required to balance the trade-off between performance and solution efficiency.
Additionally, insights derived from Table \ref{tab:1} underscore that single-connected and fully connected serve as exceptional cases within the group-connected architecture. Consequently, our initial focus is on developing an efficient solution for the group-connected architecture, intending to extend it to address other architectural configurations later. Specifically, $L=1$ represents the fully connected architecture, whereas $L=K$ represents the singly connected architecture.

For the sake of computational simplicity, we define $\hat{\zeta_n}=\sqrt{(1+\psi_n)p_n}\zeta_n\chi_n\Gamma_n$, and we rewrite $R_n\left(\boldsymbol{\Phi}\right)$ as:
\begin{equation}
\small
\label{Transform}
\begin{aligned}
&R_n\left(\boldsymbol{\Phi}\right)=
2\mathcal{R}\{\hat{\zeta_n}^{\dagger}\textbf{w}_{n}^H (\textbf{h}_{n,b}+\textbf{G}\Phi \textbf{h}_{n,u})\}\\
&-|\zeta_n|^2\left({\sum}_{{n'}\ne n}^Np_{n'}|\textbf{w}_{n}^H \textbf{h}_{{n'},b}+\textbf{G}\Phi \textbf{h}_{{n'},u}|^2\right).
\end{aligned}
\end{equation}
Furthermore, the equation suggests that the IRS redirects signals uniformly from users to an AP, implying an equivalence between maximizing the minimum achievable rate (max-min) and maximizing the sum of rates (max-sum). Introducing the notation:
\begin{equation}
\small
\begin{aligned}
\!\!\mathcal{A}\! =\! \sum_{n=1}^N\!\! \left[\!2\mathcal{Re}\{\hat{\zeta}_n^{\dagger}\boldsymbol{w}_{n}^H\mathbf{h}_{n,b}\}\! -\! |\zeta_n|^2 \sum_{{n'}\neq n}^N\! p_{n'}\! |\boldsymbol{w}_{n}^H \mathbf{h}_{{n'},b}|^2 \right].
\end{aligned}
\end{equation}

We also define the transformed vector $\hat{\boldsymbol{w}}_n = \boldsymbol{w}_{n}^H\mathbf{G}$, and introduce matrices $\mathbf{X}$, $\mathbf{Y}$, and $\mathbf{Z}$ within the same framework: $\mathbf{X} = \sum_{n=1}^N \hat{\zeta}_n^{\dagger}\mathbf{h}_{n,u}\hat{\boldsymbol{w}}_n$, $\mathbf{Y} = \sum_{n=1}^N \chi_n \Gamma_n |\zeta_n|^2 \hat{\boldsymbol{w}}_n^H \hat{\boldsymbol{w}}_n$, and $\mathbf{Z} = \sum_{{n'}\neq n}^N p_{n'} \mathbf{h}_{{n'},u} \mathbf{h}_{{n'},u}^H$, effectively consolidating the mathematical representation into the discourse. Substituting these values into Equation \eqref{Transform} redefines the optimization problem.
\begin{subequations}
\label{P2A43B}
\begin{align}
&\mathcal{P}_{2A4-3A}:  \max_{\boldsymbol{\Phi}}\mathcal{A}+2\mathcal{R}\{\text{Tr}(\Phi\boldsymbol{X})\}-\text{Tr}(\Phi\boldsymbol{Z}\Phi^H\boldsymbol{Y}),\\
&\boldsymbol{\Phi}\quad\text{as per Table \ref{tab:1}} .\notag
\end{align}
\end{subequations}

Therefore, to solve the optimization problem effectively, we break down the objective function into distinct groups. This allows us to restructure the objective function of \eqref{P2A43B} into a more manageable form, as demonstrated in the equation \eqref{OFRF}.

\begin{equation}
\small
\begin{aligned}
\label{OFRF}
&F=\mathcal{A}+{\sum}_{l=1}^L2\mathcal{R}\{\text{Tr}(\Phi_l\boldsymbol{X}_l)\}\\
&-\text{Tr}({\sum}_{{l'}=1}^L\Phi_{l'}{\sum}_{l=1}^L\boldsymbol{Z}_{{l'},l}\Phi{l}^H\boldsymbol{Y}_{{l},{l'}}).
\end{aligned}
\end{equation}
Similarly, for any group $l$ it can be expressed as: 
\begin{equation}
\small
\begin{aligned}
\label{OFRF1}
\mathcal{F}_l&=\mathcal{A}+2\mathcal{R}\{\text{Tr}(\Phi_l\boldsymbol{X}_l)\}-\text{Tr}({\sum}_{{l'}=1}^L\Phi_{l'}\boldsymbol{Z}_{{l'},l}\Phi{l}^H\boldsymbol{Y}_{{l},{l'}}),\\
&=\mathcal{A}+2\mathcal{R}\{\text{Tr}(\Phi_l\boldsymbol{X}_l)\}-\text{Tr}(\Phi_{l}\boldsymbol{Z}_{{l},l}\Phi{l}^H\boldsymbol{Y}_{{l},{l}})\\
&-\text{Tr}({\sum}_{{l'}=1}^L\Phi_{l'}\boldsymbol{Z}_{{l'},l}\Phi{l}^H\boldsymbol{Y}_{{l},{l'}}),\\
&=\mathcal{A}+\text{Tr}(\Phi_{l}\boldsymbol{Z}_{{l},l}\Phi{l}^H\boldsymbol{Y}_{{l},{l}})-2\mathcal{R}\{\text{Tr}(\Phi_l\boldsymbol{X}_l)\}\\
&+\text{Tr}({\sum}_{{l'} \ne l}^L\Phi_{l}\boldsymbol{Z}_{{l},{l'}}\Phi_{l'}^H\boldsymbol{Y}_{{l'},{l}}).\\
&\mathcal{F}_l=\mathcal{A}+\text{Tr}(\Phi_{l}\boldsymbol{Z}_{{l},l}\Phi{l}^H\boldsymbol{Y}_{{l},{l}})-2\mathcal{R}\{\text{Tr}(\Phi_l\hat{\boldsymbol{X}_l})\}.
\end{aligned}
\end{equation}
Where $\hat{\boldsymbol{X}_l}=\boldsymbol{X}_l-(\Phi_{l}\boldsymbol{Z}_{{l},{l'}}\Phi_{l'}^H\boldsymbol{Y}_{{l'},{l}})$. Based on this, we reformulate \eqref{P2A43B} for each group as follows as follows:
\begin{subequations}
\label{P2A43C}
\begin{align}
\small
\!\!\!\mathcal{P}_{2A4-3B}\!:& \min_{\boldsymbol{\Phi}_l} \mathcal{F}_l, \quad,\boldsymbol{\Phi}_l^H\boldsymbol{\Phi}_l=\textbf{I}_{\Bar{K}}, \forall l \in \mathcal{L} .
\end{align}
\end{subequations}
The \eqref{P2A43C} requires precise iterative calculations to find the optimal solution. To accomplish this, the gradient of equation \eqref{OFRF1} is first determined as $\boldsymbol{\Upsilon}=\nabla \mathcal{F}(\boldsymbol{\Phi}_l)=2\boldsymbol{Y}_{{l},{l}}\Phi_{l}\boldsymbol{Z}_{{l},l}-2\hat{\boldsymbol{X}_l}^H$. This gradient indicates the steepest descent direction and guides parameter adjustment. However, given the Riemannian geometry of \eqref{P2A43C}, it is also imperative to compute the Riemannian gradient at the point $\boldsymbol{\Phi}_l$. Subsequently, the Riemannian gradient is translated to the identity according to
\begin{equation}
    \label{RI}    \boldsymbol{\Xi}\left(\boldsymbol{\Phi}_l\right)=\boldsymbol{\Upsilon}_{\boldsymbol{\Phi}_l}{\boldsymbol{\Phi}_l}^H-{\boldsymbol{\Phi}_l}\boldsymbol{\Upsilon}_{\boldsymbol{\Phi}_l}^H.
\end{equation}
After calculating the Riemannian gradient, the value of of $\boldsymbol{\Phi}_l$ is updated using the exponential map as given below:
\begin{equation}
    \label{PU}
    \boldsymbol{\Phi}_l^{i+1}=\underbrace{\text{expm}\left(-\kappa\boldsymbol{\Xi}^{i}\right)}_{\text{Rotational matrix}}{\boldsymbol{\Phi}_l^i}.
\end{equation}
Where $\boldsymbol{\Xi}^{i}=\boldsymbol{\Xi}\left(\boldsymbol{\Phi}_l^{i}\right)$ and $\kappa\ge0$ serve as the convergence control parameter in equation \eqref{RI}. Additionally, it's essential to note that the matrix $\boldsymbol{\Xi}$ is skew-Hermitian, denoted as $\boldsymbol{\Xi}=-\boldsymbol{\Xi}^H$, ensuring that the updated value of $\boldsymbol{\Phi}_l^{i+1}$ complies with the unitary constraint as specified in equation \eqref{P2A43B}. However, the computational complexity arises from the $\text{expm}(\cdot)$ operation involved in calculating the rotational matrix. Hence, we adopt the Taylor approximation method to alleviate this complexity. For simplicity, we define $\boldsymbol{\varphi}^i$ as a rotation matrix, where $\boldsymbol{\varphi}^i=\text{expm}\left(-\kappa\boldsymbol{\Xi}^{i}\right)$. Specifically, the rotation matrix can be approximated as $\Bar{\boldsymbol{\varphi}}^i=\textbf{I}+\kappa\boldsymbol{\Xi}^i+{\kappa^2}/{2}(\boldsymbol{\Xi}^i)^2$. With this approximation, we can rewrite equation \eqref{PU} as:
\begin{equation}
\small
    \label{PU2}
    \begin{aligned}
    \boldsymbol{\Phi}_l^{i+1}&={\boldsymbol{\Phi}_l^i}-\kappa[\boldsymbol{\Upsilon}_{\boldsymbol{\Phi}_l^i}{\boldsymbol{{\Phi}_l^i}^H}{\boldsymbol{\Phi}_l^i}-{\boldsymbol{\Phi}_l^i}{\boldsymbol{\Upsilon}_{{\boldsymbol{\Phi}_l^i}}^H}{\boldsymbol{\Phi}_l^i}]\\
    &+\frac{\kappa^2}{2}\Bigl[\left(\boldsymbol{\Upsilon}_{\boldsymbol{\Phi}_l^i}{\boldsymbol{\Phi}_l^i}^H\right)^2\boldsymbol{\Phi}_l^i+\left(\boldsymbol{\Phi}_l^i{\boldsymbol{\Upsilon}_{\boldsymbol{\Phi}_l^i}^H}\right)^2\boldsymbol{\Phi}_l^i\\
    &-{\boldsymbol{\Upsilon}_{\boldsymbol{\Phi}_l^i}}{\boldsymbol{\Phi}_l^i}^H{\boldsymbol{\Phi}_l^i}{\boldsymbol{\Upsilon}_{\boldsymbol{\Phi}_l^i}^H}{\boldsymbol{\Phi}_l^i}-{\boldsymbol{\Phi}_l^i}{\boldsymbol{\Upsilon}_{\boldsymbol{\Phi}_l^i}^H}{\boldsymbol{\Upsilon}_{\boldsymbol{\Phi}_l^i}}{\boldsymbol{{\Phi}_l^i}^H}{\boldsymbol{\Phi}_l^i}\Bigr].
    \end{aligned}
\end{equation}
By employing this transformation, the multiplicative update in  \eqref{PU} is converted into an additive update. However, determining the appropriate step size $\kappa$ poses a challenge in practice due to the time-varying nature of ${\boldsymbol{\Phi}_l^i}$. An adaptive step size approach, such as the steepest descent algorithm combined with the Armijo rule, is necessary for selecting the optimal step size value. This ensures that the algorithm converges to a local optimum after several iterations. Using the third-order Taylor approximation method and adaptive step sizes, Algorithm \ref{optimizedAlgorithmLabel} explains the proposed approach.

\begin{algorithm2e}
\small
\SetAlgoLined
\SetKwInOut{Initialization}{Initialization}{\textbf{Initialization}: ${i,j}=1$, $\boldsymbol{\Phi}_l^{i}=\mathbf{I}$, and $\kappa=1$, $\boldsymbol{p}^o$,$\boldsymbol{w}^o$,$\boldsymbol{\Phi}^o$}\\
\KwOut{Optimal Values $\left(\{\boldsymbol{p}^*\},\{\boldsymbol{w}^*\},\{\boldsymbol{\Phi}^*\}\}\right)$}
\While{Error $\ge \epsilon$}{
Under the provided task segmentation and computational resource values from Algorithm \ref{algo:Algorithem 1}, solve \eqref{APB} and \eqref{p2A43A} to obtain  $\boldsymbol{w}^j$, and $\boldsymbol{p}^j$.\\
\For{each group l $\in \mathcal{L}$}{
\While{i$\le i^{max}$}{
${\boldsymbol{\Upsilon}{\boldsymbol{\Phi}_l^i}}=\frac{\partial \mathcal{F}}{\partial {\boldsymbol{\Phi}_l^*}}\left({\boldsymbol{\Phi}_l^i}\right)$;
$\boldsymbol{\Xi}^{i}={\boldsymbol{\Upsilon}{\boldsymbol{\Phi}_l^i}} {\boldsymbol{\Phi}_l^i}^H-{\boldsymbol{\Phi}l^i} {\boldsymbol{\Upsilon}{\boldsymbol{\Phi}l^i}^H}$\;
\If{$\left|\boldsymbol{\Xi}^{i}\right|{\boldsymbol{\Phi}_l^i}^2$ is sufficiently small}{
\textbf{STOP};
}
$\Bar{\boldsymbol{\varphi}}^i=\mathbf{I}-\kappa \boldsymbol{\Xi}^{i}+\left(\kappa \boldsymbol{\Xi}^{i}\right)^2 / 2-\left(\kappa \boldsymbol{\Xi}^{i}\right)^3 / 6$;
$\tilde{\mathbf{Q}}_i=\Bar{\boldsymbol{\varphi}}^i \Bar{\boldsymbol{\varphi}}^i$;\\
\While{$\mathcal{F}\left({\boldsymbol{\Phi}_l^i}\right)-\mathcal{F}\left(\tilde{\mathbf{Q}}_i {\boldsymbol{\Phi}l^i}\right) \geq \kappa\left|\boldsymbol{\Xi}^{i}\right|{\boldsymbol{\Phi}_l^i}^2$}{
$\Bar{\boldsymbol{\varphi}}^i:=\tilde{\mathbf{Q}}_i$;
$\tilde{\mathbf{Q}}_i=\Bar{\boldsymbol{\varphi}}^i \Bar{\boldsymbol{\varphi}}^i$;
$\kappa:=2 \kappa$;
}
\While{$\mathcal{F}\left({\boldsymbol{\Phi}_l^i}\right)-\mathcal{F}\left(\Bar{\boldsymbol{\varphi}}^i {\boldsymbol{\Phi}l^i}\right) \leq (\kappa/2)\left|\boldsymbol{\Xi}^{i}\right|{\boldsymbol{\Phi}_l^i}^2$}{
$\Bar{\boldsymbol{\varphi}}^i=\mathbf{I}-\kappa \boldsymbol{\Xi}^{i}+\left(\kappa \boldsymbol{\Xi}^{i}\right)^2 / 2-\left(\kappa \boldsymbol{\Xi}^{i}\right)^3 / 6$,
$\kappa:=\kappa / 2$;
}
${\boldsymbol{\Phi}_l^{i+1}}=\Bar{\boldsymbol{\varphi}}^i {\boldsymbol{\Phi}_l^i}$;
$i:=i+1$;
}
\textbf{Output:}${\boldsymbol{\Phi}_l^*}$
}
$\mathbb{F}^j\leftarrow$solve \eqref{OF2A3}.\\
Error = |$\mathbb{F}^j-\mathbb{F}^o|$.\\
\textbf{Update:} Auxiliary Variables using \eqref{AUXX1}, \eqref{AUX1} and \eqref{AUX2}. 
}
\caption{\textbf{Iterative Algorithm to solve} \eqref{P2A3}}
\label{optimizedAlgorithmLabel}
\end{algorithm2e}

\subsubsection{Step 2} Lastly, the $\chi_n$ and $\Gamma_n$ is updated.  As Per Lemma \ref{lemma-3}, the optimal values of $\boldsymbol{p}^*,\boldsymbol{w}^*,\boldsymbol{\Phi}^*$ satisfy:
\begin{equation}
\begin{aligned}
{R_n}\left({p}_n^*,\boldsymbol{w}_n^*,\boldsymbol{\Phi}^*\right)\chi_n-1=0    , \forall n \in \mathcal{N}.\\
{R_n}\left({p}_n^*,\boldsymbol{w}_n^*,\boldsymbol{\Phi}^*\right)\Gamma_n-{\beta_nS_n}=0    , \forall n \in \mathcal{N}.
\end{aligned}
\end{equation}
Then, for the next iteration,  auxiliary value can be updated as:
\begin{equation}
\small
\label{AUXX1}
\begin{aligned}
\chi_n^{j+1}=\left(1+\Theta^j\right)\chi_n^{j}+\frac{\Theta^j}{R_n\left({p}_n^*,\boldsymbol{w}_n^*,\boldsymbol{\Phi}^*\right)}    , \forall n \in \mathcal{N}.\\
\Gamma_n^{j+1}=\left(1+\Theta^j\right)\Gamma_n^{j}+\Theta^j\frac{\beta_nS_n}{R_n\left({p}_n^*,\boldsymbol{w}_n^*,\boldsymbol{\Phi}^*\right)}    , \forall n \in \mathcal{N}.
\end{aligned}
\end{equation}
The Procedure is summarised in \textbf{Algorithm} \ref{optimizedAlgorithmLabel}.
\subsubsection{Initialization}
Algorithm \ref{optimizedAlgorithmLabel} operates iteratively to find optimal solutions, and its performance and convergence depend significantly on initial values. The initialization begins by placing the UAV at the region's center and equally allocating transmission power. Subsequently, heuristics are applied to adjust the power to satisfy constraints \eqref{P0C02} and \eqref{P0C03}. Initializing the IRS phase shift as an identity matrix. After this initialization, given the values of communication resources, calculations are performed as outlined in \eqref{APB}, \eqref{AUX1}, and \eqref{AUX1}. The resulting solution is the initial guess for the subsequent steps of Algorithm \ref{optimizedAlgorithmLabel}.
\subsubsection{Convergence and  Complexity Analysis of Algorithm \ref{optimizedAlgorithmLabel}}
\paragraph{Complexity Analysis}
\label{Comp2}
This section provides per-iteration complexity analysis of Algorithm \ref{optimizedAlgorithmLabel}, which iteratively addresses  \eqref{APB}, \eqref{p2A43A}, and \eqref{P2A43C}, to find the optimal solution. To begin,  the computational complexity for  \eqref{APB}  depends on both  ($N$) and ($M$), leading to a complexity of $\mathcal{O}(NM^3)$. Moreover, \eqref{p2A43A} presents a convex optimization problem involving $N$ decision variables and $2N$ constraints resulting in the computational complexity $\mathcal{O}(2N^4)$. Lastly, the complexity related to solving \eqref{P2A43C} can be expressed as $\mathcal{O}(LI^{max}\Bar{K}^3)$, with $I^{max}$ representing the maximum number of iterations required by the gradient algorithm. To provide a comprehensive perspective, of total $i^{max}$ iterations, the overall computational complexity of \textbf{Algorithm} \ref{optimizedAlgorithmLabel} expressed as $\mathcal{O}\left(i^{max}\left((N) + (NM^3) + (2N^4) + (LI^{max}\Bar{K}^3)\right)\right)$.
\paragraph{Convergence Analysis}
Let $\mathcal{Q}=[\boldsymbol{p},\boldsymbol{w}^*,\boldsymbol{\Phi}^*,\boldsymbol{\psi}^*,\boldsymbol{\zeta}^*]$ represent best achievable values for \eqref{p3b} at the current iteration. The objective function value of \eqref{p3b} at this iteration is denoted as $\mathbb{F}(\mathcal{Q}_j)$, which corresponds to a subset of the original optimization problem, which iteratively improves as per observation made by Beck et al. \cite{beck2010sequential}, i.e., $\mathbb{F}(\mathcal{Q}_{i+1}) \geq \mathbb{F}(\mathcal{Q}_{i})$. Convergence occurs when the difference between iterations falls below a threshold $\epsilon$, i.e., $\mathbb{F}_{i+1} - \mathbb{F}_{i} \leq \epsilon$.
\begin{table}[t]
    \centering
    \caption{Simulation Parameters}
    \label{tab:SP}
    \begin{tabular}{@{}l@{}rl@{}@{}@{}r@{}}
    \hline \hline
    Parameters & Values & Parameters & Values \\ \hline \hline
    N & [2 - 20] Users & $C^{'}_n$ & [0.5 - 100] Kcycles/bit \\
    K & 32 - 64 Elements & M & 64 \\
    B & 20MHz & $N_o$ & -174 dBm/Hz \\
    $PL_o$ & 31.5 dB & $S_n$ & [100-200] Kbits \\
    $\{f_n^{max},f_s^{max}\}$ & \{0.1, 20\}$\times 10^10$ cycles/s & $E^{Max}$ & 2 Joules \\
    \hline
    \end{tabular}
\end{table}

\section{Results and Discussion}
\label{RD}
\begin{figure}[!t]
    \centering
    \begin{subfigure}[t]{0.5\textwidth}
        \centering
        \includegraphics[width=0.8\linewidth]{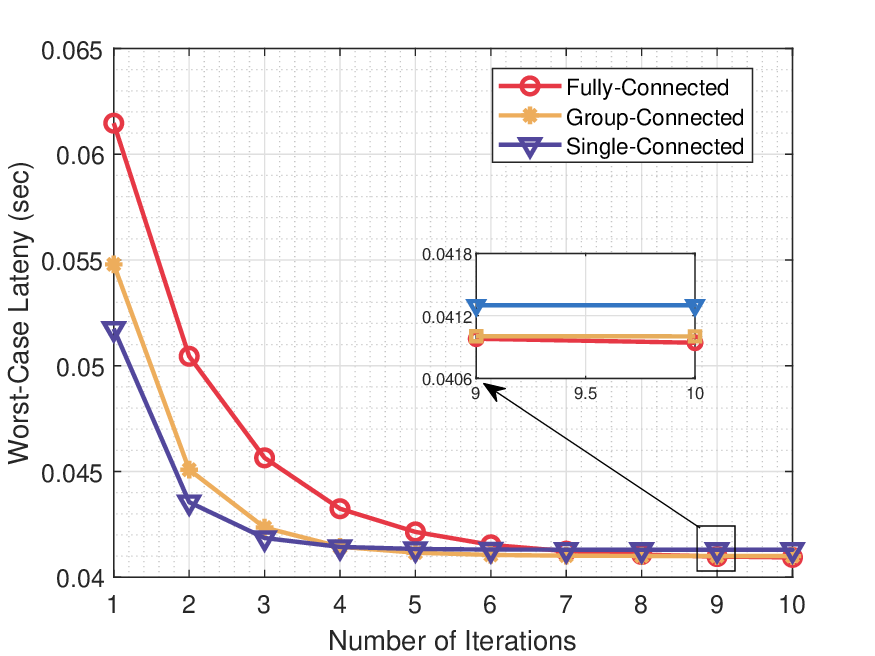}
        \vspace{-1mm}
        \caption{Algorithm 1}
        \label{A1}
    \end{subfigure}%
    \vspace{-0.5mm}
    \\
    \begin{subfigure}[t]{0.5\textwidth}
        \centering
        \includegraphics[width=0.8\linewidth]{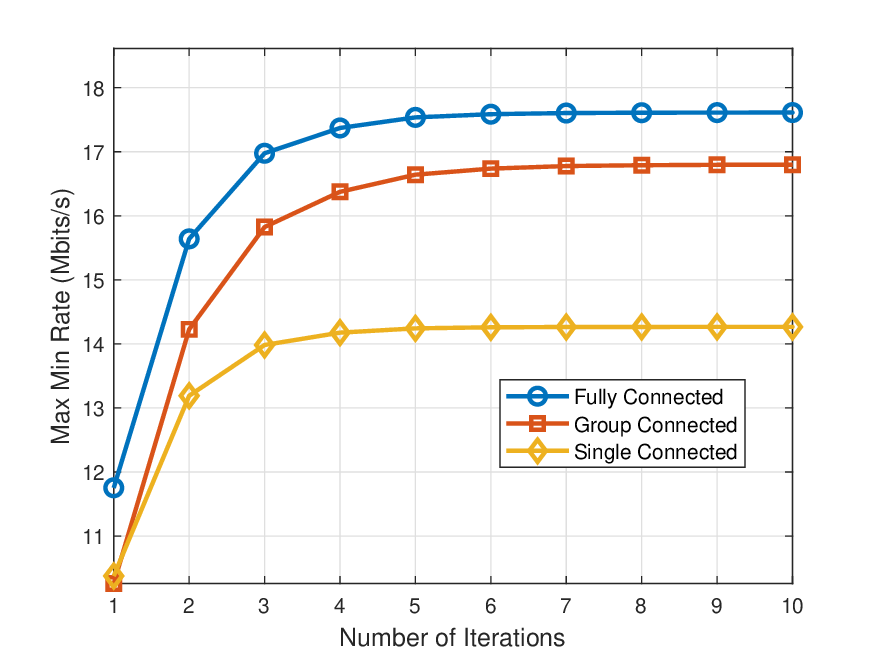}
        \vspace{-1mm}
        \caption{Algorithm 2}
         \label{A2}
    \end{subfigure}
    \vspace{-1mm}
    \caption{Convergence Analysis}
    \label{CA}
\end{figure}
This section presents numerical results to demonstrate the effectiveness of the {\color{black}BD-IRS-UAV} enabled MEC network. To showcase the efficacy of the proposed scheme, we conducted extensive Monte-Carlo simulations and calculated average results using the simulation parameters listed in Table \ref{tab:SP}. In our simulations, we randomly distributed $N$ users across a $150m\times150m$ area, with $[d_{n,u}^{max},d_{u,b}^{max}]=100m$ and $H=80m$.  As for the communications channel, we consider both the small-scale fading and the large-scale path loss. More explicitly, the small-scale fading is assumed to be independent and identically distributed (i.i.d.) and obey the complex Gaussian distribution associated with zero mean and unit variance, while the path loss in dB is given by
$
    PL=PL_o+10\alpha\log_{10}\left(\frac{d}{d_o}\right)+\eta,  
$
where $PL_o$ represents the path loss at reference distance $d_o=1$, $\alpha$ and $d$ represent the path-loss exponent and distance of the communication link. Moreover, $\eta$ represents the additional path loss components that depend on the probability of line-of-sight communication. For example, in an urban environment, $\eta=\{1,20\}$, $1$ for line-of-sight communication and $20$ otherwise \cite{mahmood2023joint}. Moreover, in our case, we consider that users communicate with the AP through the direct line with a $5\%$ chance of a line-of-sight communication link. In contrast, communication between the user and the AP through UAV-IRS is carried out with $95\%$ probability of line-of-sight communication.
In addition, we evaluate the performance of the proposed scheme based on several Key Performance Indicators (KPIs), including achievable rate and latency, and compare these outcomes to their respective benchmark schemes.
\par
In the first section of the proposed framework, we perform the optimal allocation of computational resources by iteratively solving equations \eqref{P1A} and \eqref{tasksegmentation}, given the available communication resources, to minimize maximum latency, as detailed in Section \ref{OCS}. Additionally, we evaluate the performance of the proposed scheme by comparing it with the following reference approaches:
\begin{itemize}
    \item \textit{Binary Offloading}: This scheme enforces the task to be computed locally or at the MEC server, i.e., $\beta \in \{0,1\}$. 
    \item \textit{Edge Computation}: Tasks are entirely computed at the MEC server in this scheme, i.e., $\beta =0$. 
    \item \textit{Fixed Computation}: In this scheme, only edge computational resources are optimized by allocating the entire local resources \cite{9651523}. 
   \item \textit{Local}: In this scheme, optimal allocation of  local computational resources is carried out, i.e., $\beta =1$
\end{itemize}
Once the computational resources are established, we optimize the allocation of communication resources by iteratively solving equations \eqref{APB}, \eqref{p2A43}, and \eqref{P2A43C} to maximize the minimum rate while satisfying relevant constraints. Additionally, we conduct a comparative analysis to demonstrate the efficacy of various IRS architectures in reflective mode, evaluating the influence of optimal power allocation, phase shift control, and optimal placement of BD-IRS-UAV.
\begin{itemize}
    \item \textit{Fully Connected}: In this architecture, all elements of the IRS are interconnected, which leads to higher gain. In our proposed framework, achieving a fully connected architecture involves solving \eqref{P2A43C} while setting the number of groups equal to 1, i.e., $L=1$.
    \item \textit{Group Connected}: In this scheme, the total IRS elements are divided into $L$ distant groups, with each group forming a fully connected architecture and solved using \eqref{P2A43C}. 
    \item  \textit{Single Connected}: In contrast, in a single-connected architecture, the elements are not connected to each other. In our proposed framework, achieving a single connected architecture is accomplished by setting the number of groups equal to the number of elements, i.e., $L=K$.
\end{itemize}
In addition, we conduct Monte-Carlo simulations to obtain averaged results based on independent channel realizations.
\subsection{Performance Analysis}
Fig. \ref{CA} provides a comprehensive view of the convergence analysis of the proposed algorithms under varying simulation parameters. Specifically, Fig. \ref{A1} presents the convergence analysis of Algorithm \ref{algo:Algorithem 1}, using worst-case latency as the performance metric. These results demonstrate a remarkable convergence to the stable point as a result of iterative updates to the task segmentation variable, which facilitate the optimal allocation of computational resources. Furthermore, Fig. \ref{A2}  shows the convergence behavior of algorithm \ref{optimizedAlgorithmLabel}, which iteratively solves equations  \eqref{APB}, \eqref{p2A43}, and \eqref{P2A43C}. After some iterations, the results affirm its convergence towards a stable and feasible solution. It is essential to emphasize that the convergence time varies significantly among different IRS architectures due to their unique counts of non-zero IRS elements. Consequently, each architecture necessitates different times to reach viable solutions that effectively minimize network latency and maximize the data rate.
\subsection{Data Rate and Latency Analysis}

\begin{figure}[!t]

    \centering
    \begin{subfigure}[t]{0.5\textwidth}
        	\centering
	\includegraphics[width=0.8\linewidth]{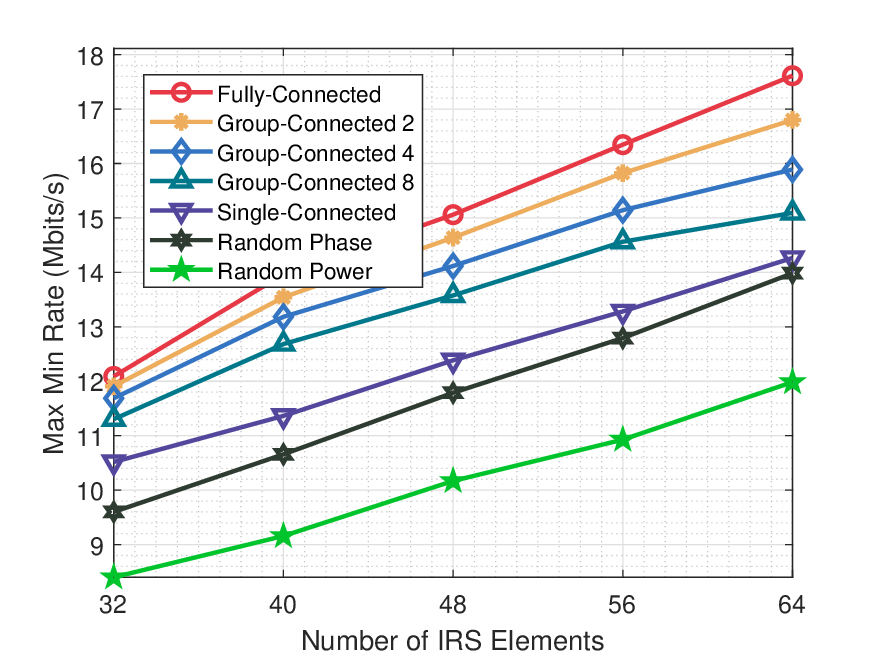}
 \vspace{-2mm}
	\caption{Optimal Placement over UAV}
	\label{fig:R2A}
    \end{subfigure}%
    \vspace{-0.5mm}
    \\
    \begin{subfigure}[t]{0.5\textwidth}
        	\centering
	\includegraphics[width=0.8\linewidth]{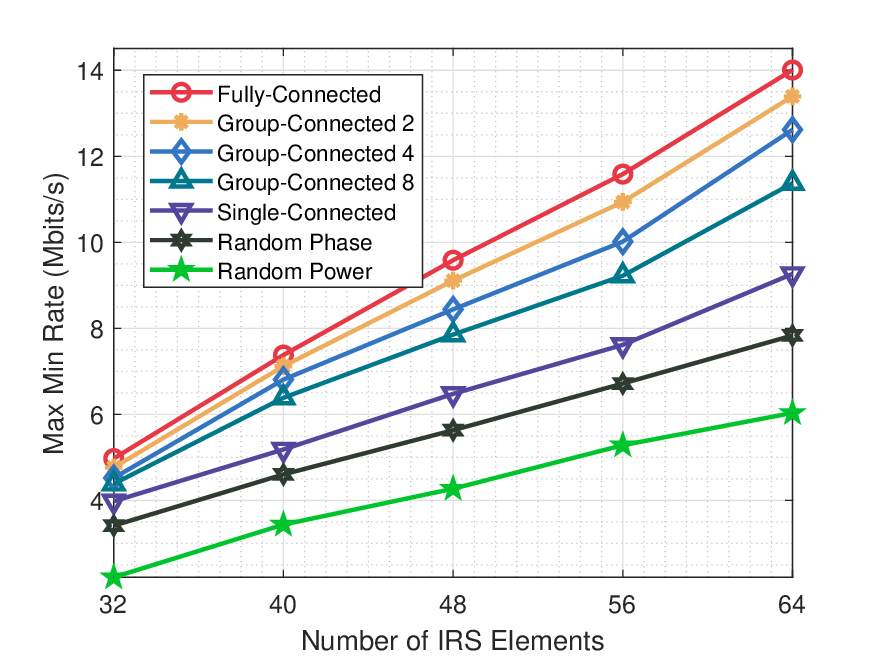}
 \vspace{-2mm}
	\caption{Fixed Placement over Building}
	\label{fig:R2B}
    \end{subfigure}
    \vspace{-1mm}
	\caption{Impact of Phase Shift on System Performance}
	\label{fig:R2}
\end{figure}

\begin{figure*}[!t]
    \centering
    \begin{subfigure}[t]{0.32\textwidth}
        \centering
        \includegraphics[height=1.7in]{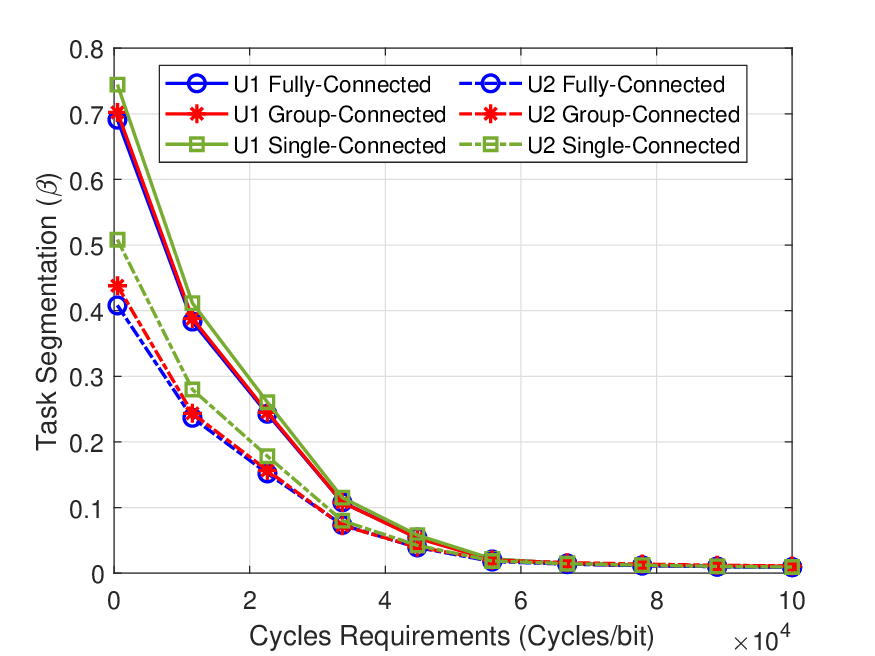}
        \caption{Task segmentation}
        \label{R4a}
    \end{subfigure}%
    ~ 
    \begin{subfigure}[t]{0.32\textwidth}
        \centering
        \includegraphics[height=1.7in]{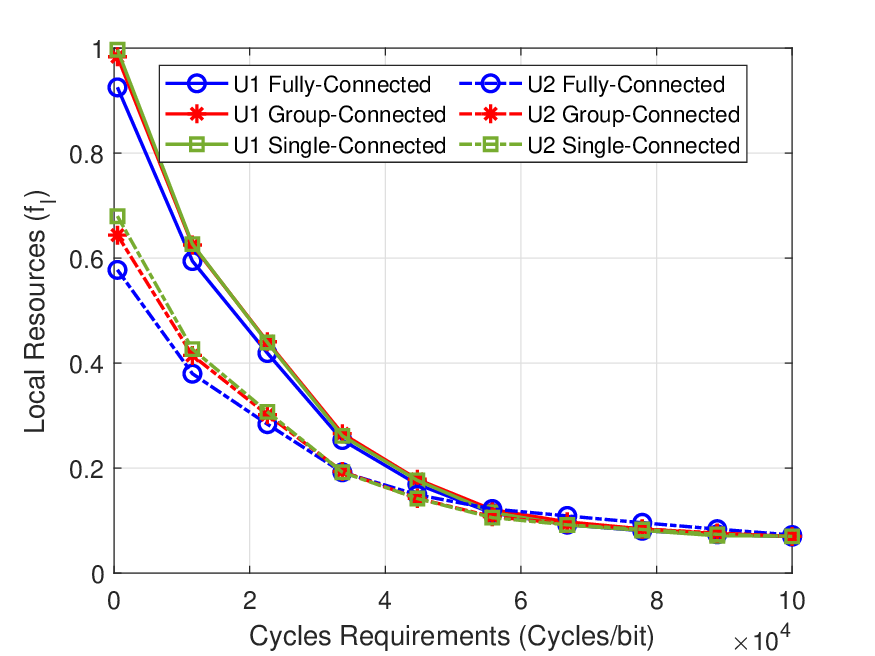}
        \caption{Local Resources}
        \label{R4b}
    \end{subfigure}
      ~ 
    \begin{subfigure}[t]{0.32\textwidth}
        \centering
        \includegraphics[height=1.7in]{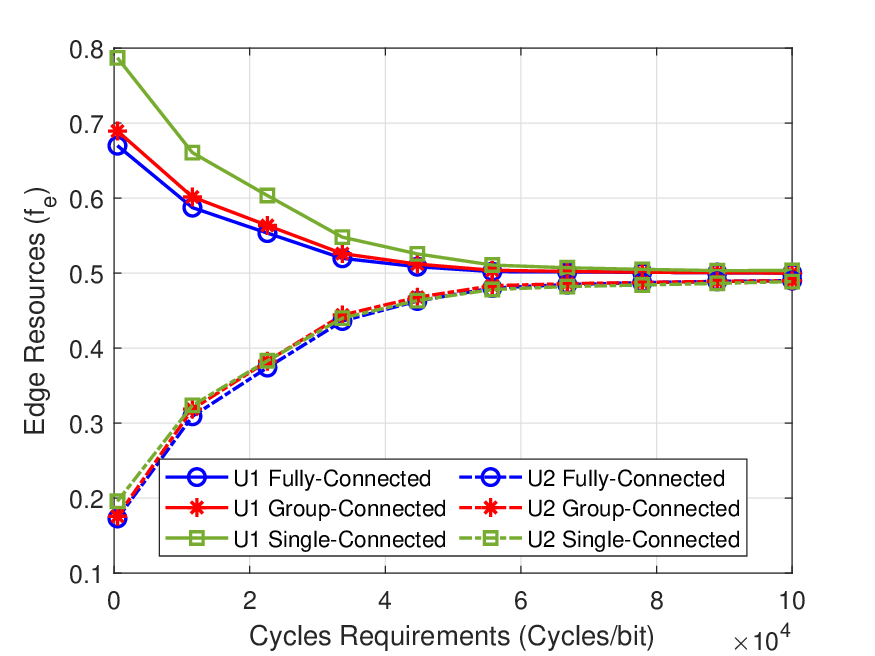}
        \caption{Edge Resources}
        \label{R4c}
    \end{subfigure}
    \vspace{-1mm}
    \caption{Task Segmentation and Resource Allocation Overview,  U1: User 1, U2: User 2}
    \label{R4}
\end{figure*}
\begin{figure}[h]
	\centering
	\includegraphics[width=0.8\linewidth]{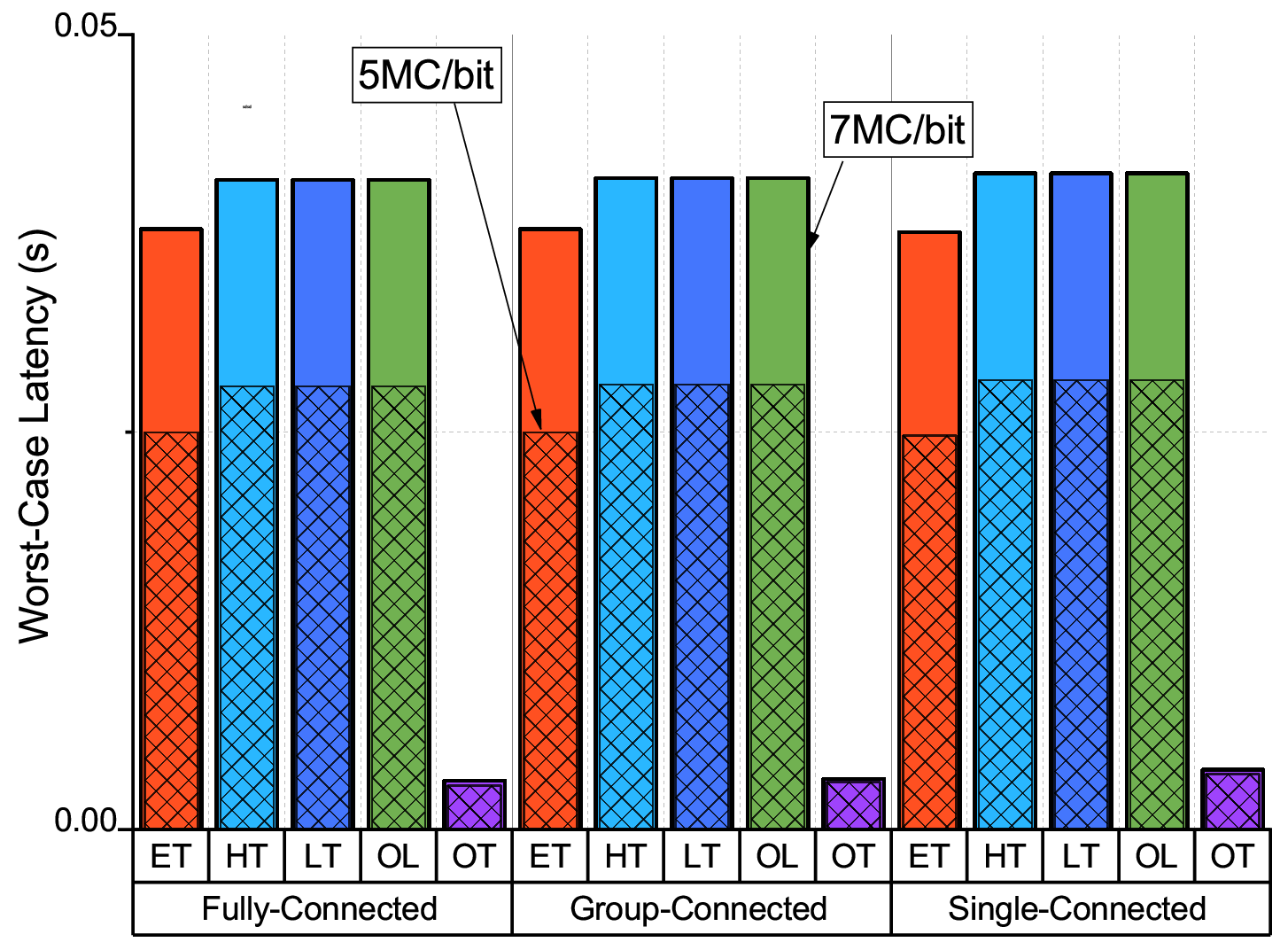}
 \vspace{-1mm}
	\caption{System Performance, Where ET: Edge Time, HT: Hovering Time, LT: Local time, OL: Overall Latency, OT: Offloading Time}
	\label{fig:R5}
\end{figure}
\par
Fig. \ref{fig:R2} presents a comparative analysis of the proposed schemes across various IRS architectures, focusing on the data rate metric. The results demonstrate a clear trend: as the number of IRS elements increases, the minimum data rate experienced by users also increases. This trend is particularly notable when comparing the performance of random phase shift and random power approaches in both UAV-enabled IRS and fixed deployment over the building. Furthermore, the optimal deployment of IRS in both fully connected and group-connected modes yields performance improvements of approximately $25.76\%$ and $25.43\%$, respectively, compared to the conventional fixed deployment of IRS on buildings. These findings emphasize the significance of optimal phase shift, transmission power control, and optimal IRS placement in next-generation wireless communication networks, particularly in dense urban environments where high-rise buildings frequently obstruct direct line-of-sight communication.
\par
In addition, we conducted a comparative analysis across various IRS architectures, revealing that fully connected and group-connected architectures consistently outperform the traditional single-connected architecture by approximately  $23.45\%$ and $17.75\%$, respectively. The superiority of fully connected architectures stems from their significantly higher number of non-zero elements, resulting in higher channel gain. It is noteworthy that fully connected architectures, while delivering exceptional performance, also entail higher complexity, as discussed in Section \ref{Comp2}. It is also shown that as the number of elements per group in a group-connected architecture increases, its performance approaches that of fully connected architectures, resulting in lower complexity. This observation highlights a fundamental tradeoff between performance and complexity that must be carefully managed during system design.

\begin{figure}[h]
	\centering
	\includegraphics[width=0.8\linewidth]{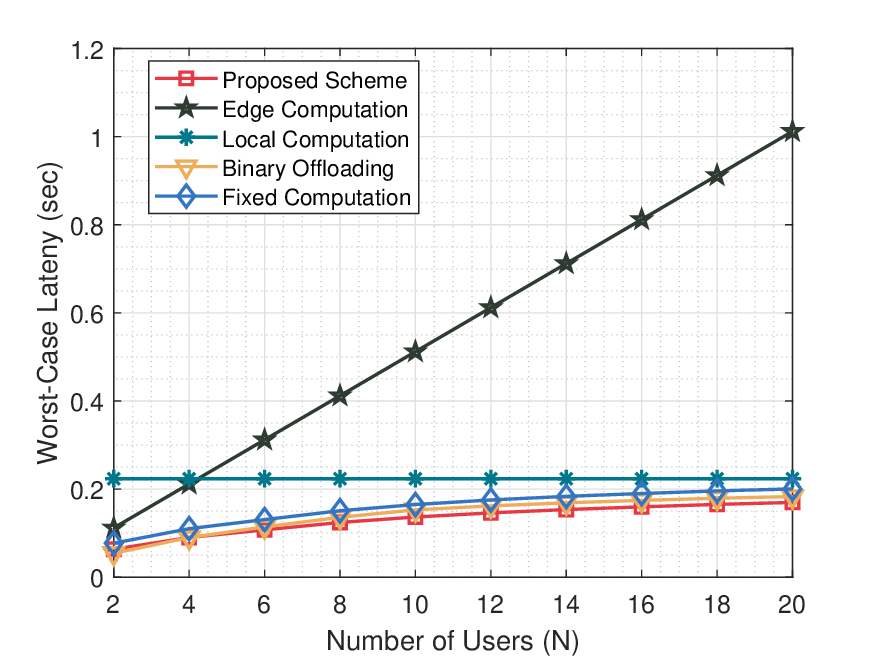}
 \vspace{-1mm}
	\caption{Latency Analysis }
	\label{fig:R6}
\end{figure}

\begin{figure}[h]
	\centering
	\includegraphics[width=0.8\linewidth]{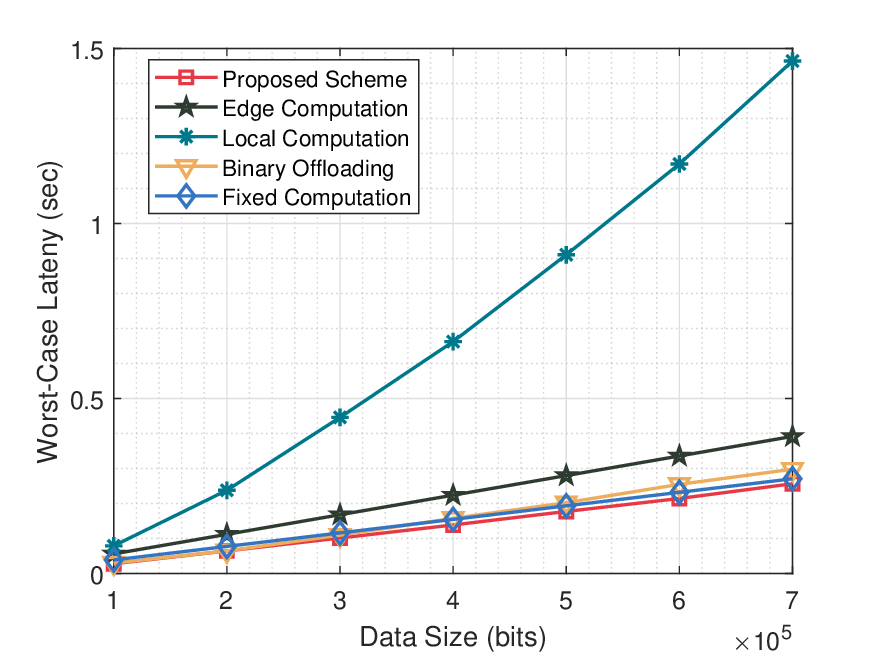}
 \vspace{-1mm}
	\caption{UAV Hovering Time}
	\label{fig:R7}
\end{figure}
\par
Next, we analyzed the effect of different IRS architectures on the MEC network's performance across different computational cycle requirements, as illustrated in Fig. \ref{R4}. Our simulations focused on a scenario with two users, U1 and U2, where U1 had more sensed bits to process than U2, specifically $S_1 > S_2$. Fig. \ref{R4a} illustrates that, with a small number of cycle requirements, tasks are effectively partitioned. Conversely, $\beta_n$ approaches zero as the computational cycle requirements increase. This leads to a decrease in local computational resources, while edge computational resources begin to increase, as demonstrated in Fig. \ref{R4c} and Fig. \ref{R4b}, respectively. This trend arises because computing bits with small cycles requires less energy and vice versa, as per \eqref{local Energy}. Therefore, to adhere to the constraint in \ref{P0C03}, devices begin offloading data to the MEC server for extensive computational processing.

\par
The results in Fig. \ref{R4a} underscore a preference for local computation of tasks with a maximum number of bits in our scenario. In this context, $S_1 > S_2$ leads to $\beta_1 > \beta_2$. This preference arises because offloading a smaller number of bits to the MEC server consumes less energy. Conversely, users with larger bit requirements experience increased latency when offloading compared to others. As a result, to ensure fairness, MEC allocates more resources to UE, having more computing data, as illustrated in Fig. \ref{R4b} and Fig. \ref{R4c}.
Additionally, it has been observed from \eqref{OF1} that the offloading time of tasks can be reduced by increasing the offloading data rate for specific users using advanced IRS architectures, such as fully connected or group-connected architectures. Results demonstrate that employing advanced IRS architectures results in higher data offloading to the MEC server than traditional single-connected architectures. This, in turn, leads to the allocation of more computational resources to other users, minimizing latency and maintaining fairness.

\par
Fig. \ref{fig:R5} provides detailed impacts of key system parameters. The results reveal that both fully connected and group-connected architectures result in minimal offloading times, leading to more efficient allocation of computational resources and, consequently, a reduction in overall system latency even for higher computational cycle requirements. Notably, the performance of fully connected and group-connected architectures exhibits similarities. However, it is worth noting that the fully connected architecture exhibits greater complexity due to its processing of a larger number of elements. This observation highlights the effectiveness of the group-connected architecture in terms of complexity compared to the fully connected architecture.
\par
Next, we delve into the performance analysis of our proposed scheme, which leverages optimal allocation of computation and communication resources through a partial offloading approach using the group-connected architecture of the IRS as shown in Fig. \ref{fig:R6}. The results demonstrate the superiority of our proposed approach by considering latency as a performance metric compared to others. When the number of users is small, the proposed scheme slightly outperforms the binary-offloading counterpart. And the performance gain increases notably as the number of users increases. This is because the user count rises and the per-user resource allocation decreases, leading to congestion at the MEC server.
In contrast, our proposed scheme efficiently partitions tasks into segments, effectively sidestepping congestion and minimizing system latency by approximately $13.44\%$ relative to the binary offloading scheme. These results underscore the effectiveness of our task segmentation approach compared to binary offloading, full edge offloading, and fixed computational resource allocation schemes. In addition, simulations were conducted to assess the performance across various data size (bits) requirements. The results, illustrated in Fig. \ref{fig:R7}, unequivocally highlight the efficacy of the proposed scheme.

\section{Conclusion}
\label{CC}
This paper has provided a new architecture combining UAVs and BD-IRS in the context of MEC networks and proposed an optimization framework to improve system performance. In particular, the BD-IRS-UAV placement, local and edge computational resource allocation, task segmentation, power allocation, received beamforming vector, and IRS 2.0 phase shift design were jointly optimized to minimize the worst-case system latency. The joint optimization problem was decoupled into two subproblems due to non-convexity/non-linearity and NP-hardness. Then, each subproblem was transformed into convex optimization and solved using standard convex methods. The proposed MEC framework was evaluated using the fully connected and group-connected architecture of BD-IRS and compared with the conventional single-connected diagonal IRS architecture. Numerical results demonstrated the superiority of the proposed BD-IRS-UAV enabled MEC network compared to conventional diagonal IRS in terms of worst-case latency, rate, and energy consumption.   
\begin{appendices}
\section{Proof of LEMMA \ref{lemma-beta}}
\label{APA}
Given the values of $\boldsymbol{f^s}$ and $\boldsymbol{f^l}$, the task segmentation value is determined by analyzing the monotonic relationship between the local computation time and the offloading time of the tasks, as shown in equations \eqref{local time}, \eqref{offloading time}, and \eqref{Edge time}. As the value of $\beta_n$ increases, the local computation time of the task decreases, leading to an increase in the offloading time. Based on this observation, we define the computational delay as $T_n(\hat{\beta}_n) = \max\{t_n^l, t_n^E\}$. The minimal delay occurs when $t_n^l = t_n^E$. By utilizing these formulas and conducting further calculations, we define the task segmentation value as follows:
\begin{equation}
\small
\beta_n=\frac{f_n^lf_n^{max}\left(R_nC_n+f_n^sf_s^{max}S_n\right)}{R_nC_n\left(f_n^lf_n^{max}+f_n^sf_s^{max}\right)+S_nf_n^lf_n^{max}f_n^sf_s^{max}}
\end{equation}
This end the proof. 

\end{appendices}
\bibliographystyle{IEEEtran}
\bibliography{ReferenceBibFile}
\end{document}